\documentclass[US-letter,USenglish,authorcolumns,autoref]{lipics-v2021}

\usepackage[noend]{algpseudocode}
\usepackage{algorithm}
\usepackage{times}
\usepackage{amssymb}
\usepackage{amsmath}
\usepackage{latexsym}
\usepackage{graphics}
\usepackage{url}
\usepackage{verbatim}
\usepackage{color}
\usepackage{setspace}
\usepackage{multirow}
\usepackage{lineno}
\usepackage{booktabs}
\usepackage{graphicx}
\usepackage{caption}
\usepackage[shortlabels]{enumitem}

\renewcommand{\tt} {\mathtt}

\definecolor{winered}{rgb}{0.5,0.2,0}
\definecolor{lipicsYellow}{rgb}{0.99,0.78,0.07}

\usepackage{hyperref}
\hypersetup{ 
    colorlinks = true,
    allcolors={winered},
}

\usepackage{wrapfig}
\usepackage[normalem]{ulem}
\usepackage{changepage}

\algnewcommand\And{\textbf{and}}
\algnewcommand\Or{\textbf{or}}
\algnewcommand\Not{\textbf{not}}
\algnewcommand\In{\textbf{in}}
\algnewcommand\Each{\textbf{each}}

\newcommand{\squishlist}{
 \begin{list}{$\bullet$}
  { \setlength{\itemsep}{0pt}
     \setlength{\parsep}{3pt}
     \setlength{\topsep}{3pt}
     \setlength{\partopsep}{0pt}
     \setlength{\leftmargin}{2.5em}
     \setlength{\labelwidth}{1em}
     \setlength{\labelsep}{0.5em} } }

\newcommand{\squishlisttwo}{
 \begin{list}{$\triangleright$}
  { \setlength{\itemsep}{0pt}
     \setlength{\parsep}{0pt}
    \setlength{\topsep}{0pt}
    \setlength{\partopsep}{0pt}
    \setlength{\leftmargin}{2em}
    \setlength{\labelwidth}{1.5em}
    \setlength{\labelsep}{0.5em} } }

\newcommand{\squishend}{
  \end{list}  }

\usepackage{listings}

\definecolor{verbgray}{gray}{0.9}

\lstnewenvironment{code}{%
  \lstset{backgroundcolor=\color{verbgray},
 frame=single,
  language=C,
  framerule=0pt,
  basicstyle=\ttfamily,
  showstringspaces=false,
  commentstyle=\color{blue}\textit,
  keywordstyle=\color{black}\bf,
  numbers=none,
  keepspaces=true,
  columns=fullflexible}}{}

\newcommand{\bS}{\mathbf{S}}

\newcommand{\bM}{\mathbf{M}}

\newcommand{\bN}{\mathbf{N}}

\definecolor{shadecolor}{rgb}{.91, .91, .91}
\definecolor{bordercolor}{rgb}{.8, .8, .6}

\definecolor{ultramarine}{rgb}{0, 0.125, 0.376}

 \definecolor{arsenic}{rgb}{0.23, 0.27, 0.29}
 \definecolor{beige}{rgb}{0.96, 0.96, 0.86}

\definecolor{amber}{rgb}{1.0, 0.75, 0.0}
\definecolor{orange}{rgb}{1.0, 0.49, 0.0}
\definecolor{dandelion}{rgb}{0.94, 0.88, 0.19}

  \definecolor{indiagreen}{rgb}{0.07, 0.53, 0.03}
  \definecolor{huntergreen}{rgb}{0.21, 0.37, 0.23}

\newcommand{\blue}[1] {\textcolor{blue}{#1}}

\newcommand{\defo}[1] {\emph{\textcolor{blue}{#1}}}

\definecolor{shadecolor}{rgb}{.9, .9, .9}

 \usepackage{framed}

    {\endMakeFramed}

    \newenvironment{frshaded*}{%
    \MakeFramed {\advance\hsize-\width \FrameRestore}}%
    {\endMakeFramed}
    
    \newcounter{examplecounter}
\newenvironment{exam}{
 \begin{frshaded*}
    \refstepcounter{examplecounter}%
    \noindent
  \textbf{Example \arabic{examplecounter}}%
  \quad
}{%
\end{frshaded*}
}

{\endMakeFramed}

\newenvironment{frshaded2*}{%
    \MakeFramed {\advance\hsize-\width \FrameRestore}}%
    {\endMakeFramed}

\newenvironment{result}{
 \begin{frshaded2*}
}{%
\end{frshaded2*}

}

{\endMakeFramed}

\newenvironment{frshaded3*}{%
    \MakeFramed {\advance\hsize-\width \FrameRestore}}%
    {\endMakeFramed}

\graphicspath{{graphics/}}

\newcommand{\edit}[1]{\textcolor{black}{#1}}

\DeclareMathOperator{\wt}{weight}

\nolinenumbers

\title{Universal cycle constructions for $k$-subsets and $k$-multisets}
\titlerunning{~~Universal cycles for $k$-subsets and $k$-multisets}

\author{Colin Campbell}{University of Guelph, Canada}{}{}{}
\author{Luke Janik-Jones}{University of Guelph, Canada}{}{}{}
\author{Joe Sawada}{University of Guelph, Canada}{}{}{}

\author{~}{~}{}{}{}
\authorrunning{~}

\Copyright{~}
\keywords{universal cycle, subsets, multisets, cycle-joining}

\begin{document}

%-----------  TITLE STUFF ---------------

%\title{ Decoding universal cycles for the $k$-subsets of an $n$-set}
%\author{ D. Gabric, W. Imam, L. Janik-Jones, J. Sawada \\
%}
\maketitle

\frenchspacing

\begin{abstract} 
A \emph{universal cycle} for a set $\bS$ of combinatorial objects is a cyclic sequence of length $|\bS|$ that contains a representation of each element in $\bS$ exactly once as a substring. 
If $\bS$ is the set of $k$-subsets of $[n] = \{\tt{1,2},\ldots, \tt{n}\}$, it is well-known that universal cycles do not always exists when applying a simple string representation, where $\tt{12}$ or $\tt{21}$ could represent the subset $\{\tt{1,2}\}$.  Similarly, if $\bS$ is the set of  $k$-multisets of $[n]$, it is also known that universal cycles do not always exist using a similar representation,  where $\tt{112}$, $\tt{121}$, or $\tt{211}$ could represent the multiset $\{\tt{1,1,2}\}$.  \edit{By mapping these sets to an appropriate family of labeled graphs, universal cycles are known to exist, but without a known efficient construction.  In this paper we consider a new representation for $k$-subsets and $k$-multisets that leads to efficient universal cycle constructions for all $n,k \geq 2$.}
%and demonstrate the existence of universal cycles for all $n,k \geq 2$.  Furthermore, 
We provide successor-rule algorithms to construct such universal cycles in $\mathcal{O}(n)$ time per symbol using $\mathcal{O}(n)$ space and demonstrate that necklace concatenation algorithms allow the same sequences to be generated in $\mathcal{O}(1)$ amortized time per symbol.   They are the first known efficient universal cycle constructions for $k$-multisets. 
The results are obtained by considering constructions for bounded-weight de Bruijn sequences. \edit{In particular, we demonstrate that a bounded-weight generalization of the Grandmama de Bruijn sequence can be constructed in $\mathcal{O}(1)$ amortized time per symbol.}
%By considering a shorthand frequency representation, we demonstrate the existence of the first known universal cycles for $k$-multisets.  Furthermore, we provide several interesting constructions that can be generated in $\mathcal{O}(n)$ time per symbol using $\mathcal{O}(n)$ space.  Furthermore, we demonstrate that each universal cycle can also be generated in $\mathcal{O}(1)$ amortized time per symbol.
\end{abstract}

%=====================================================================
%=====================================================================
%=====================================================================
\section{Introduction}  \label{sec:intro}

A \defo{universal cycle} for a set $\mathbf{S}$ of combinatorial objects is a cyclic sequence of length $|\mathbf{S}|$ that contains a \emph{representation} of each element in $\mathbf{S}$ exactly once as a substring. 
In this paper we focus on two sets: (a) $\bS_k(n)$, which denotes the set of $k$-subsets of $[n] = \{\tt{1,2},\ldots, \tt{n}\}$, and (b) $\bM_k(n)$, which denotes the set of $k$-multisets of $[n]$.  For example:
\begin{itemize}

\item $ \bS_2(3) = \Bigl\{ \{\tt{1,2}\} ,\{\tt{1,3}\}, \{\tt{2,3}\}  \Bigr\}$, and 

\item $ \bM_2(3) = \Bigl\{ \{\tt{1,1}\} ,\{\tt{1,2}\}, \{\tt{1,3}\},  \{\tt{2,2}\} ,\{\tt{2,3}\}, \{\tt{3,3}\} \Bigr\}. $
\end{itemize}
\noindent 
As we demonstrate in this paper, the choice of representation for a $k$-subset or $k$-multiset is critical to the existence \edit{and efficient construction} of universal cycles for these sets.

%When considering any algorithm for a given combinatorial object, the choice of representation can be critical to the existence, simplicity, and efficiency of any related algorithm.
%

\medskip

\noindent
{\bf Subsets.}~~Universal cycles for $\mathbf{S}_k(n)$ have been studied primarily by considering a \defo{standard} string representation, i.e., the subset $\{\tt{1,2}\}$ can be represented by either $\tt{12}$ or $\tt{21}$ (see~\cite{chung,2009research,lanius,rudoy}).  This representation has a major drawback as universal cycles exist only if $n$ divides $n \choose k$, or equivalently, if $k$ divides ${n-1 \choose k-1}$.  When this condition is met, Chung, Diaconis, and Graham~\cite{chung} conjectured that universal cycles for $\mathbf{S}_k(n)$ exist for sufficiently large $n$ once $k$ is fixed. The conjecture was verified for $k = 3,4,6$ in~\cite{jackson,hurlbert}, and universal cycles for subsets were shown not to exist when $k=n-2$~\cite{minustwo}. Recently, this conjecture was answered in the affirmative by Glock et al.~\cite{Glock} by studying Euler tours in hypergraphs.  The vast number of cases where universal cycles for $\mathbf{S}_k(n)$ do not exist using this standard representation has led to the study of subset packings and coverings~\cite{near-cycles,packings}.  
%To summarize, the choice of a standard representation for subsets has left no hope of a generic universal cycle construction, never mind efficient decoding routines.

%%%%%
Another common representation for a $k$-subset is a length-$n$ binary string with exactly $k$ ones.  It is straightforward to observe that non-trivial universal cycles do not exist using this representation.   
However, by considering a \defo{shorthand} binary string representation (one that omits the final redundant symbol), each $k$-subset corresponds to a unique length $n{-}1$ substring with either $k{-}1$ or $k$ ones.  Universal cycles for these strings, and hence $k$-subsets, can be constructed via a necklace concatenation approach~\cite{fixedweight2} in $\mathcal{O}(1)$ amortized time using $\mathcal{O}(n)$ space~\cite{fdgray13}.\footnote{The application of this universal cycle to subsets was first noted in~\cite{godbole2011}.} The same sequence can also be generated by an $\mathcal{O}(n)$ time per symbol successor rule~\cite{coolest} based on the ``missing symbol register''. %Unfortunately, there are no efficient decoding routines known for this construction.

In this paper, we consider a 
novel \defo{difference} representation that represents a $k$-subset with a length-$k$ string where the first symbol is the smallest element of the subset, and each successive symbol in the string is added to the previous symbol to obtain the next largest symbol in the subset. For example, the subset $\{\tt{1,3,4}\}$ for $n=5$ can be represented by $\tt{121}$.  The \defo{weight} of a string is the sum of its symbols. 
Observe that by applying the difference representation,
the set of $k$-subsets corresponds to all strings  over the alphabet $\{\tt{1,2,\ldots, n{-}k{+}1}\}$ of length $k$, where the weight of each string is bounded above by $n$. 
A similar notion of a difference representation was considered previously by Jackson~\cite{jackson} and Hurlbert~\cite{hurlbert}, however, it was considered cyclically and did not fix the first symbol.

%By mapping $\tt{x}$ to $\tt{x-1}$, the set of subsets can be represented by the set of all strings over the alphabet $\{0,1,\ldots , n{-}k\}$ of length $k$ with weight less than or equal to $n{-}k$.

%\begin{remark}
%A universal cycle for $\mathbf{S}_k(n)$ using the difference representation is a special case of universal cycles for $k$-ary strings of length~$n$ with an upper bound on the weight. 
%\end{remark}

%
The three different representations for $k$-subsets are illustrated in Table~\ref{tab:reps} for $\bS_3(5)$.
The sequence $\tt{1101011100}$ is a universal cycle for $\mathbf{S}_3(5)$ using the shorthand representation, and the sequence $\tt{1112122113}$
%$\tt{1113112212}$ 
%$\tt{3112212111}$
is a universal cycle for $\mathbf{S}_3(5)$ using the difference representation.  
There does not exist a universal cycle for $\mathbf{S}_3(5)$ using the standard representation. 

%
%================================
\begin{table}[h] 
    \centering
    \begin{tabular}{c|c|c|c}
        \multicolumn{1}{c|}{\bf Subset} & \multicolumn{1}{c|}{\bf Shorthand} & \multicolumn{1}{c|}{\bf Difference} & \multicolumn{1}{c}{\bf Standard}  \\
        \hline
        \{$\tt{1,2,3}$\}  & $\tt{1110~\textcolor{red}{(0)}}$ & $\tt{111}$ & $\tt{123, 132, \ldots, \mbox{ or } 321}$ \\
        \{$\tt{1,2,4}$\}  & $\tt{1101~\textcolor{red}{(0)}}$ & $\tt{112}$ & $\tt{124, 142, \ldots, \mbox{ or } 421}$\\
        \{$\tt{1,2,5}$\}  & $\tt{1100~\textcolor{red}{(1)}}$ & $\tt{113}$ & $\tt{125, 152, \ldots, \mbox{ or } 521}$\\
        \{$\tt{1,3,4}$\}  & $\tt{1011~\textcolor{red}{(0)}}$ & $\tt{121}$ & $\tt{134, 143, \ldots, \mbox{ or } 431}$\\
        \{$\tt{1,3,5}$\}  & $\tt{1010~\textcolor{red}{(1)}}$ & $\tt{122}$ & $\tt{135, 153, \ldots, \mbox{ or } 531}$\\
        \{$\tt{1,4,5}$\}  & $\tt{1001~\textcolor{red}{(1)}}$ & $\tt{131}$& $\tt{145, 154, \ldots, \mbox{ or } 541}$\\
        \{$\tt{2,3,4}$\}  & $\tt{0111~\textcolor{red}{(0)}}$ &  $\tt{211}$& $\tt{234, 243, \ldots, \mbox{ or } 432}$\\
        \{$\tt{2,3,5}$\}  & $\tt{0110~\textcolor{red}{(1)}}$ &  $\tt{212}$& $\tt{235, 253, \ldots, \mbox{ or } 532}$ \\
        \{$\tt{2,4,5}$\}  & $\tt{0101~\textcolor{red}{(1)}}$ &  $\tt{221}$& $\tt{245, 254, \ldots, \mbox{ or } 542}$ \\
        \{$\tt{3,4,5}$\}  & $\tt{0011~\textcolor{red}{(1)}}$ &  $\tt{311}$& $\tt{345, 354, \ldots, \mbox{ or } 543}$ \\
    
    \end{tabular}
    \caption{Illustrating different representations for the elements of $\mathbf{S}_3(5)$. For the shorthand representatives, the omitted redundant symbol is highlighted in parentheses.} 
    \label{tab:reps}
\end{table}
%================================

%\begin{exam}   \small
%The sequence $\tt{1101011100}$ is a universal cycle for $\mathbf{S}_3(5)$ using the shorthand representation, and the sequence $\tt{1113112212}$ 
%$\tt{3112212111}$
%is a universal cycle for $\mathbf{S}_3(5)$ using the difference representation.  
%There does not exist a universal cycle for $\mathbf{S}_3(5)$ using the standard representation.  
%\end{exam}

\edit{Cantwell et al.~\cite{cantwell} use a labeled graph to represent a $k$-subset.  Using this representation they demonstrate the existence of a universal cycle for a corresponding family of graphs; however, no efficient construction is provided.}

\medskip

\noindent
{\bf Multisets.}~~Like with $k$-subsets, the choice of representation is critical to constructing universal cycles for $k$-multisets.  It is well known that $|\bM_k(n)| = {n + k -1 \choose k}$.  Universal cycles for $\bM_k(n)$ were first considered by Jackson~\cite{jackson} and Hurlbert, Johnson, and Zahl~\cite{hurlbert2009} using a \defo{standard} string representation, where $\tt{112}$, $\tt{121}$, or $\tt{211}$ could represent the multiset $\{\tt{1},\tt{1},\tt{2}\}$.    Similar to $k$-subsets, they  demonstrate that universal cycles for $\bM_k(n)$ exists only if $n$ divides ${n+k-1 \choose k}$.  Another method to represent a $k$-multiset is with a \defo{frequency map}, which is a length-$n$ string $f_1f_2\cdots f_n$ where each $f_i$ represents the number of occurrences of $i$ in the multiset. Note, $\sum_{i=1}^n f_i = k$.  
For instance, $110$ is the frequency map representation for the multiset $\{1,2\}$ in $\bM_2(3)$.  
%Using this representation, the multisets of $\bM_2(3)$ noted earlier can be represented uniquely by $\{ \tt{200, 110, 101, 020, 011, 002} \}$. 
It is a simple exercise to demonstrate that universal cycles for $\bM_k(n)$ do not exist using this representation for $n,k \geq 2$. %For instance, any universal cycle for $\bM_3(3)$ must contain 002 and to maintain the frequency, the three symbols following this string must also be 002. 
However, observe that the final symbol $f_n$ in a frequency map is redundant: its value can be determined from the previous $n-1$ values,  $f_n = k -  \sum_{i=1}^{n-1} f_i$.  Thus, we say $f_1\cdots f_{n-1}$ is a \defo{shorthand frequency} representation for a multiset in $\bM_k(n)$.  
%Using this representation, the multisets of $\bM_2(3)$ can be represented by $\{ \tt{20, 11, 10, 02, 01, 00} \}$.
Observe that this set corresponds to all strings over the alphabet $\{\tt{0,1},\ldots, \tt{k}\}$ of length $n-1$,
where the weight of each string is bounded above by $k$.  
%A universal cycle for this set is $\tt{001102}$. 

A \defo{difference} representation can also be used for $k$-multisets.  Consider the same definition applied for $k$-subsets except assign the first symbol of the difference representative to be \emph{one less} than the smallest symbol in the multiset.  Equivalently, apply the original definition of a difference representative directly, but define the $k$-multisets to be over the ground set $\{0,1, \ldots, n-1\}$.  Observe that by applying this difference representation,
the set of $k$-multisets corresponds to all strings  over the alphabet $\{\tt{0,1,\ldots, n{-}1}\}$ of length $k$, where the weight of each string is bounded above by $n-1$. 

These three different representations for $k$-multisets are illustrated in Table~\ref{tab:reps2} for $\bM_3(3)$.
The sequence $\tt{0011021203}$ is a universal cycle for $\mathbf{M}_3(3)$ using the shorthand frequency representation, and the sequence $\tt{0001011002}$
%$\tt{1113112212}$ 
%$\tt{3112212111}$
is a universal cycle for $\mathbf{M}_3(3)$ using the difference representation.  
There does not exist a universal cycle for $\mathbf{M}_3(3)$ using the standard representation.

%================================
\begin{table}[h] 
    \centering
    \begin{tabular}{c|c|c|c}
        \multicolumn{1}{c|}{\bf Multiset} & \multicolumn{1}{c|}{\bf Shorthand frequency} & \multicolumn{1}{c|}{\bf Difference} & \multicolumn{1}{c}{\bf Standard}  \\
        \hline
        \{$\tt{1,1,1}$\}  & $\tt{30~\textcolor{red}{(0)}}$ & $\tt{000}$ & $\tt{111}$ \\
        \{$\tt{1,1,2}$\}  & $\tt{21~\textcolor{red}{(0)}}$ & $\tt{001}$ & $\tt{112,121,\mathrm{or~}211}$\\
        \{$\tt{1,1,3}$\}  & $\tt{20~\textcolor{red}{(1)}}$ & $\tt{002}$ & $\tt{113,131,\mathrm{or~}311}$\\
        \{$\tt{1,2,2}$\}  & $\tt{12~\textcolor{red}{(0)}}$ & $\tt{010}$ & $\tt{122,212,\mathrm{or~}221}$\\
        \{$\tt{1,2,3}$\}  & $\tt{11~\textcolor{red}{(1)}}$ & $\tt{011}$ & $\tt{123,132,213,231,312,\mathrm{or~}321}$\\
        \{$\tt{1,3,3}$\}  & $\tt{10~\textcolor{red}{(2)}}$ & $\tt{020}$& $\tt{133,313,\mathrm{or~}331}$\\
        \{$\tt{2,2,2}$\}  & $\tt{03~\textcolor{red}{(0)}}$ &  $\tt{100}$& $\tt{222}$\\
        \{$\tt{2,2,3}$\}  & $\tt{02~\textcolor{red}{(1)}}$ &  $\tt{101}$& $\tt{223,232,\mathrm{or~}322}$ \\
        \{$\tt{2,3,3}$\}  & $\tt{01~\textcolor{red}{(2)}}$ &  $\tt{110}$& $\tt{233,323,\mathrm{or~}332}$ \\
        \{$\tt{3,3,3}$\}  & $\tt{00~\textcolor{red}{(3)}}$ &  $\tt{200}$& $\tt{333}$ \\
    
    \end{tabular}
    \caption{Illustrating different representations for the elements of $\mathbf{M}_3(3)$. For the shorthand frequency representatives, the omitted redundant symbol is highlighted in parentheses.} 
    \label{tab:reps2}
\end{table}
%================================

 %An application of universal cycles for $k$-multisets to proximity sensor networks is discussed in~\cite{chen2024}.  A multiset $M$ is said to be $t$-restricted if the number of times that each element appears in $M$ is at most $t$. Instances for the existence of universal cycles for $t$-restricted multisets is considered in~\cite{godbole2011}.

Universal cycles for $k$-multisets have applications related to proximity sensor networks~\cite{chen2024}.
If the number of each element in a $k$-multiset is bounded above by $t$, then such $t$-restricted multisets have been studied in~\cite{godbole2011}. 
\edit{Like with $k$-subsets, Cantwell et al.~\cite{cantwell} use a labeled graph to represent a $k$-multiset.  Using this representation they demonstrate the existence of a universal cycle for a corresponding family of graphs; however, no efficient construction is provided.}

%
%\red{re-write this part}
%Eight $\mathcal{O}(n)$-time successor-rule constructions are provided in~\cite{karyframework}; four bound the weight from below, and four bound the weight from above.  Concatenation approaches for each construction are also given in~\cite{concattree}. %, where each symbol can be generated in $\mathcal{O}(1)$ amortized time. Is this TRUE??
%De Bruijn sequence constructions  with both lower and upper bounds on the weight  are considered in~\cite{sawada2013}.  None of these constructions were previously known to be efficiently decoded.  %In this paper demonstrate that one can be efficiently decoded and leverage the result to efficiently decode universal cycles for $k$-subsets.
%

%%%%

\bigskip

%=====================================================================
\noindent
{\bf Main Results.} ~Recall that the
difference representatives of $\bS_k(n)$ and both the shorthand frequency and difference representatives of $\bM_k(n)$ are each special cases of fixed-length strings over a prescribed alphabet, where the strings have an upper bound on their weight.  Thus,   results for bounded-weight de Bruijn sequences (see Section~\ref{sec:bound}) can be applied to obtain the main results of this paper. \edit{ In particular, we demonstrate that a bounded-weight generalization of the Grandmama de Bruijn sequence~\cite{grandma2} can be constructed in $\mathcal{O}(1)$ amortized time per symbol. We also present a novel construction by applying the missing symbol register (MSR).}
\begin{enumerate}

    \item We demonstrate an algorithm to construct a universal cycle $\mathcal{S}_1$ for $\bS_k(n)$ using  difference representatives that runs  in $\mathcal{O}(1)$ amortized time per symbol.  Moreover, given any subset in $\bS_k(n)$ with difference representative $d_1d_2\cdots d_k$, the symbol following this string in $\mathcal{S}_1$ can be computed in $\mathcal{O}(n)$ time. 

\smallskip

    \item We demonstrate an algorithm to construct a universal cycle $\mathcal{M}_1$ for $\bM_k(n)$ using  shorthand frequency representatives that runs in $\mathcal{O}(1)$ amortized time per symbol.  Moreover, given any subset in $\bM_k(n)$ with shorthand frequency representative $f_1f_2\cdots f_{n-1}$, the symbol following this string in $\mathcal{M}_1$ can be computed in $\mathcal{O}(n)$ time.  

    \item We demonstrate an algorithm to construct a universal cycle $\mathcal{M}_2$ for $\bM_k(n)$ using difference representatives that runs in $\mathcal{O}(1)$ amortized time per symbol.  Moreover, given any subset in $\bM_k(n)$ with difference representative $d_1d_2\cdots d_{k}$, the symbol following this string in $\mathcal{M}_2$ can be computed in $\mathcal{O}(n)$ time. 
\end{enumerate}
\noindent
The last two results are the first known efficient universal cycle constructions for $k$-multisets. 
All the algorithms require polynomial space with respect to $n$ and $k$.
 Implementations for many known universal cycle constructions, including the algorithms presented in this paper, are available at \url{http://debruijnsequence.org}~\cite{dbseq}.

\bigskip

\noindent
{\bf Outline of paper.}
In Section~\ref{sec:back} we present preliminary definitions, notation, and provide a brief background on the cycle-joining process and concatenation trees.  In Section~\ref{sec:bound} we describe a known algorithm to construct universal cycles for $t$-ary strings of length $n$ with a given weight constraint and provide new insights into one particular construction. We also introduce a new construction with interesting properties.  Then in Section~\ref{sec:results}, we apply these constructions to $k$-subsets and $k$-multisets.  
%We conclude in Section~\ref{sec:fut} with avenues for future research.

%=====================================================================
%=====================================================================
%=====================================================================
\section{Preliminaries} \label{sec:back}

Let $\Sigma_t = \{\tt{0,1,2},\ldots, \tt{t{-}1}\}$ and let $\Sigma_t(n)$ denote the set of all length-$n$ strings over $\Sigma_t$.  Recall, the \defo{weight} of a string is the sum of its symbols. Let $\wt(\alpha)$ denote the weight of a string $\alpha$. 
Let $\Sigma_t(n,w)$ denote the subset of all strings in $\Sigma_t(n)$ with weight \emph{at most} $w$.   Later in this paper we will be interested in strings as they are listed in \defo{colex} order, which is standard lexicographic order when the strings are read from right to left. For instance, the following set of bounded-weight strings is listed in colex order:
\[ \Sigma_3(3,2) = \{\tt{000, 100, 200, 010, 110, 020, 001, 101, 011, 002}\}.\]
Consider two strings $\alpha$ and $\beta$.
Let  $\alpha \cdot \beta$ denote the  concatenation of $\alpha$ and $\beta$, and   let $\beta^j$ denote $j$ copies of $\beta$ concatenated together.
The \defo{aperiodic prefix} of $\alpha$ is the shortest string $\beta$ such that $\alpha = \beta^j$ for some $j \geq 1$.
A string is said to be \defo{aperiodic} if it is the same as its aperiodic prefix; otherwise, it is said to be \defo{periodic}.

A \defo{necklace class} is an equivalence class of strings under rotation.
A \defo{necklace} is the lexicographically smallest string in a necklace class.   Let $\bN_t(n)$ denote the set of length-$n$ necklaces over $\Sigma_t$.  For example, 
\[ \bN_3(3) = \{\tt{000, 001, 002, 011, 012, 021, 022, 111,112,122, 222} \}. \]
Let $\bN_t(n,w)$ denote the subset of all strings in $\bN_t(n)$ with weight at most $w$.  For example, 
\[ \bN_3(3,2) = \{\tt{000, 001, 002, 011} \}. \]
%
  
%Let $\ap(\alpha)$ denote the \defo{aperiodic prefix} of $\alpha$, i.e., the shortest string $\beta$ such that $\alpha = \beta^j$ for some $j \geq 1$.
  % For example, $\ap(\tt{111}) = \tt{1}$.  
  % A string $\alpha$ is said to be \defo{aperiodic} (primitive) if $\ap(\alpha) = \alpha$; otherwise $\alpha$ is \defo{periodic} (non-primitive).  

A \defo{feedback function} $f$ maps strings from $\Sigma_t(n)$ to $\Sigma_t$.
A \defo{feedback shift register} is a function that maps a string $\alpha = \tt{a}_1\tt{a}_2 \cdots \tt{a}_n$ to $\tt{a}_2 \cdots \tt{a}_n f(\alpha)$ for a given feedback function $f$.  
The \defo{pure cycling register} (PCR) is a feedback shift register with feedback function $f(\tt{a}_1\tt{a}_2\cdots \tt{a}_n) = \tt{a}_1$.  It partitions any set closed under rotation into necklace classes that can be represented by necklaces from $\bN_t(n)$.   Consider a subset $\bS$ of strings in $\Sigma_t(n+1)$ where each string has a unique shorthand representative like those outlined in Section~\ref{sec:intro}.  Let $\bS'$ denote this set of shorthand representatives. Then each string
$\alpha = \tt{a}_1\tt{a}_2\cdots \tt{a}_n \in \bS'$ corresponds to a unique string $\tt{a}_1\tt{a}_2\cdots \tt{a}_n\tt{z} \in \bS$.  We call $\tt{z}$ the missing symbol. The 
\defo{missing symbol register} (MSR) is a feedback shift register for such sets $\bS'$ with feedback function $f(\tt{a}_1\tt{a}_2\cdots \tt{a}_n) = \tt{z}$.  The MSR partitions any shorthand set $\bS'$ closed under rotation into necklace classes that can be represented by necklaces in $\bN_t(n+1)$ (see Example~\ref{ex:msr} in Section~\ref{sec:msr}). The MSR has been applied to shorthand representatives of permutations, subsets, and more generally, strings with fixed content in~\cite{coolest}.

%==========================================================================
%==========================================================================
%==========================================================================
\subsection{Cycle-joining trees and successor rules} \label{sec:cycle-join}

In this section, assume that the representatives for a set of strings $\bS$ is just $\bS$ itself.
Given a universal cycle $\mathcal{U} = \tt{u}_1\tt{u}_2 \cdots \tt{u}_m$ for  $\mathbf{S}$, a 
\defo{successor rule} for $\mathcal{U}$ is a function that maps each string $\alpha$ in $\mathbf{S}$ to the symbol following $\alpha$ in $\mathcal{U}$.  Thus, starting with any string in $\mathbf{S}$,  $\mathcal{U}$ (considered cyclically) can be constructed by repeatedly applying its successor rule.

Let $\mathcal{U}_i$ denote a universal cycle for $\mathbf{S}_i \subseteq \Sigma_t(n)$. Two universal cycles $\mathcal{U}_1$ and $\mathcal{U}_2$ are said to be \defo{disjoint} if $\mathbf{S}_1 \cap \mathbf{S}_2 = \emptyset$.
Let $\tt{x},\tt{y}$ be distinct symbols in $\Sigma_t$.
If $\sigma = \tt{x}\tt{s}_2\cdots \tt{s}_n$ and $\hat \sigma = \tt{y}\tt{s}_2\cdots \tt{s}_n$, then $\sigma$ and $\hat \sigma$ are said to be \defo{conjugates} of each other, and $(\sigma, \hat \sigma)$ is called a \defo{conjugate pair}.   Two disjoint universal cycles $\mathcal{U}_1$ and $\mathcal{U}_2$ can be joined together to form a single universal cycle by swapping the successor of each string in a conjugate pair $(\sigma, \hat \sigma)$, where $\sigma$ is found in $\mathcal{U}_1$ and $\hat \sigma$ is found in $\mathcal{U}_2$.  This is the well-known \defo{cycle-joining process} where the two cycles $\mathcal{U}_1$ and $\mathcal{U}_2$ are joined via the conjugate pair $(\sigma, \hat \sigma)$.
A \defo{cycle-joining tree} $\mathbb{T}$  is an \emph{unordered} tree where 
the nodes correspond to a disjoint set of universal cycles $\mathcal{U}_1, \mathcal{U}_2, \ldots, \mathcal{U}_\ell$ and an edge between $\mathcal{U}_i$ and $\mathcal{U}_j$ is defined by a conjugate pair $(\sigma, \hat \sigma)$ such that $\sigma \in \mathbf{S}_i$ and $\hat \sigma \in \mathbf{S}_j$. 
For our purposes, we consider cycle-joining trees to be rooted.  
In the binary case, repeatedly joining adjacent cycles in a cycle-joining tree yields a \emph{unique} universal cycle for $\bS_1 \cup \bS_2 \cup \cdots \cup \bS_\ell$.  
For non-binary alphabets, different universal cycles are possible depending on the order that the cycles are joined~\cite{concattree}.  Thus, the following property was introduced to produce a generic successor rule for non-binary alphabets based on the conjugate pairs.

%When two adjacent nodes $\mathcal{U}_i$ and $\mathcal{U}_j$ in a cycle-joining tree $\mathbb{T}$ are joined  to obtain $\mathcal{U}$ via Lemma~\ref{thm:concat} (rotating the cycles as appropriate), the nodes are unified and replaced with $\mathcal{U}$ (the edge between $\mathcal{U}_i$ and $\mathcal{U}_j$ is contracted).  Repeating this process until only one node remains produces a universal cycle for $\mathbf{S}_1 \cup \mathbf{S}_2 \cup \cdots \cup \mathbf{S}_t$.  In the binary case, the same universal cycle is produced, no matter the order in which the cycles are joined.  This is because no string can belong to more than one conjugate pair in the underlying definition of $\mathbb{T}$.  However, when $k>2$, the order that the cycles are joined can be important.  

%In upcoming discussion regarding both successor rules and concatenation trees, we require the  underlying cycle-joining trees to have the following property when $k>2$.

\begin{result}
\noindent
{\bf Chain Property}: If a node in a cycle-joining tree $\mathbb{T}$ has two children joined via conjugate pairs $(\tt{x}\tt{s}_2\cdots \tt{s}_n, \tt{y}\tt{s}_2\cdots \tt{s}_n)$ and 
$(\tt{x}'\tt{t}_2\cdots \tt{t}_n, \tt{y'}\tt{t}_2\cdots \tt{t}_n)$, then $\tt{s}_2\cdots \tt{s}_n \neq \tt{t}_2\cdots \tt{t}_n$. 
\end{result}

Let $\mathbb{T}$  be a cycle-joining tree  satisfying the Chain Property for an underlying set of strings $\bS$, where the nodes are joined by a set $\mathbf{C}$ of conjugate pairs. If the tree contains $\ell$ cycles then $\mathbf{C}$ contains $\ell{-}1$ conjugate pairs, each corresponding to an edge in $\mathbb{T}$.  Say $\gamma$ \defo{belongs to} a conjugate pair $(\sigma, \hat \sigma)$ if either $\gamma = \sigma$ or $\gamma = \hat \sigma$.  
Let $C_1, C_2, \ldots, C_m$ denote a maximal length path of nodes in $\mathbb{T}$ such that for each $1 \leq i < m$, the node $C_i$ is the parent of $C_{i+1}$
and they are joined via a conjugate pair of the form  $(\tt{x}_i\beta, \tt{x}_{i+1}\beta$); $\beta$ is the same in each conjugate pair.  
Call such a path a \defo{chain} of length $m$, and define $g(\tt{x}_i\beta) = \tt{x}_{i+1}$ for $i < m$ and $g(\tt{x}_m\beta) = \tt{x}_1$ for each string belonging to a conjugate pair.
%
%Let $\alpha = \tt{a}_1\tt{a}_2\cdots \tt{a}_n$ be a string in $\bS$.  
%If $\alpha = \tt{x}_i\beta$ belongs to a conjugate pair that joins two nodes in such a chain of $m$ nodes
%$\tt{x}_1\beta, \tt{x}_2\beta, \ldots, \tt{x}_m\beta$, 
%let $g(\alpha) = \tt{x}_{i+1}$ if $i < m$ and $g(\alpha) = \tt{x}_1$ if $i = m$. 
%Let $\alpha = \tt{x}_i\beta$ be a string in $\bS$ that belongs to a conjugate pair that joins two nodes in such a chain.
%Define $g(\alpha) = \tt{x}_{i+1}$ if $i < m$ and $g(\alpha) = \tt{x}_1$ if $i = m$.
%
Then the following function $h$ is a successor rule for a corresponding universal cycle for $\mathbf{S}$,  where $f(\alpha)$ is the underlying feedback function (such as the PCR or MSR) that induces the cycles in $\mathbb{T}$:

\begin{center}
$h(\alpha) = \left\{ \begin{array}{ll}

        g(\alpha)  &\ \  \mbox{if $\alpha$ belongs to a conjugate pair  in $\mathbf{C}$;}\\
         f(\alpha) \  &\ \  \mbox{otherwise.}\end{array} \right. $
         
\end{center}
This rule corresponds to the function ${\uparrow}f_1(\alpha)$ in \cite{concattree} when $f$ is the feedback function for the PCR.  

\begin{exam}
Consider the cycle-joining tree with nodes induced by the PCR illustrated in Figure~\ref{fig:tree}.
The path of nodes $\tt{000} \rightarrow \tt{001}\rightarrow \tt{002} \rightarrow \tt{003} \rightarrow \tt{004}$ form a chain where $\beta = 00$. The conjugate pairs are
$(\tt{000},\tt{100})$,  $(\tt{100},\tt{200})$, $(\tt{200},\tt{300})$, and $(\tt{300},\tt{400})$, respectively.  Thus, 
$h(\tt{000}) = \tt{1}$, 
$h(\tt{100}) = \tt{2}$, 
$h(\tt{2000}) = \tt{3}$, 
$h(\tt{300}) = \tt{4}$, and  
$h(\tt{400}) = \tt{0}$. 
\end{exam}

\noindent
 Applying the rule directly requires storing all the conjugate pairs; however, in our application it is relatively straightforward to test whether a string belongs to a conjugate pair in $\mathcal{O}(n)$ time using only $\mathcal{O}(n)$ space.
In this paper, we will define cycle-joining trees based on the PCR and MSR and then apply $h(\alpha)$ to construct corresponding universal cycles efficiently. 

%=====================================================================
%=====================================================================
%=====================================================================
\subsection{Concatenation trees} \label{sec:concattree}

In this section we present a simplified presentation of ``left'' concatenation trees as originally defined in~\cite{concattree}, by adding an assumption that every non-root periodic node in a cycle-joining tree is a leaf.

A \defo{bifurcated ordered tree} (BOT) is a rooted tree where each child of a node belongs to either the \defo{left-children} or the \defo{right-children}, and the nodes within each group are ordered. 
Let $\mathbb{T}$ be a PCR-based cycle-joining tree rooted at $r$ satisfying the Chain Property. 
A \defo{concatenation tree} %\footnote{We define a ``left'' concatenation tree in our presentation: a node with the same change index as its parent is defined as a leftchild.} 
based on $\mathbb{T}$ converts $\mathbb{T}$ into a BOT
by ordering the children, splitting them into left-children and right-children, and assigning possibly new labels to represent each node (cycle).  The parent-child relationship remains the same and each node has a \defo{change index}, which is the unique index where a node's label differs from that of its parent based on the conjugate pair joining the two nodes.  This index is unique using our added assumption that all non-root periodic nodes are leaves, except for the case which occurs if a node is the child of a periodic root. We will handle that case as it arises.  The root $r$ can be assigned an arbitrary change index.   If a node has change index $c$, all children with a change index less than \emph{or equal to $c$} are classified as left children and are ordered from smallest to largest change index.  All children with a change index greater than $c$ are classified as right children and are also ordered from smallest to largest change index. An \defo{RCL ordering} traverses a concatenation tree recursively from the root by first visiting all {\bf R}ight children, then the {\bf C}urrent node, then the {\bf L}eft children.  For example, see Figure~\ref{fig:tree} which illustrates a cycle-joining tree, a corresponding concatenation tree, and its RCL ordering.  

Given a concatenation tree $T$, let $\mathcal{U}_T$ be the sequence obtained by concatenating the aperiodic prefixes of each node as they are visited in RCL order. The following result corresponds to Theorem 3 from~\cite{concattree}.

\begin{theorem} \label{thm:concat}
Let $T$ be a concatenation tree for a PCR-based cycle-joining tree with the Chain Property for an underlying set $\bS$.  
Then $\mathcal{U}_T$ is a universal cycle for $\bS$ with successor rule $h(\alpha)$.
\end{theorem}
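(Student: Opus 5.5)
We will prove Theorem~\ref{thm:concat} by induction on the number of nodes of the concatenation tree $T$. The invariant is stronger than the theorem itself. We show that for every subtree $T_v$ rooted at a node $v$, the RCL sequence $\mathcal{U}_{T_v}$ (concatenated aperiodic prefixes) is a universal cycle for the union of the necklace classes in $T_v$, whose successor rule is $h$ restricted to the conjugate pairs inside $T_v$. We also fix where the cycle is "cut": $\mathcal{U}_{T_v}$, read cyclically, begins at a specific rotation of $v$'s label. This rotation is determined by the change index of $v$, and it is exactly the rotation needed to splice into the parent.

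The base case is a single node with label $\alpha$. The PCR cycle of $\alpha$ is precisely its aperiodic prefix read cyclically, and since no conjugate pairs are involved, $h=f$. For the inductive step at a node $v$ with label $\alpha = \tt{a}_1\cdots\tt{a}_n$ and change index $c$, consider a child $u$ with change index $c'$. The conjugate pair joining $v$ and $u$ differs only at position $c'$. Rotating both labels so that position $c'$ comes first gives the pair $(\tt{a}_{c'}\beta, \tt{x}\beta)$ with common suffix $\beta$. By the induction hypothesis, $\mathcal{U}_{T_u}$ starts exactly so that the string $\tt{x}\beta$ (cyclically) ends $\mathcal{U}_{T_u}$ and wraps to its start. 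Inserting $\mathcal{U}_{T_u}$ at the corresponding gap in the aperiodic prefix of $\alpha$ is therefore the standard cycle-joining splice, and it changes successors exactly on the two strings of that conjugate pair.

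Right children, whose change index exceeds $c$, have splice points lying after position $c$. Left children, whose change index is at most $c$, have splice points at or before $c$ in the rotation starting at $v$'s cut point. Listing the right children first, then $v$'s aperiodic prefix, then the left children, each ordered by increasing change index, places every child's block at its correct gap. This is the heart of the argument: RCL order is the linearization of these insertion points.

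The main obstacle is the case where several children share the same change index. Within a chain, all children of a node share the same $\beta$. The Chain Property guarantees that two children of one node never use the same $\beta$, so a chain passes down through successive descendants rather than branching at one node. Because a child with the same change index as its parent is a left child, the $\tt{x}_i\beta$ strings are visited consecutively down the chain. The resulting successors then cycle $\tt{x}_1\to\tt{x}_2\to\cdots\to\tt{x}_m\to\tt{x}_1$, which is exactly $g$. Verifying this requires a separate small induction along the chain: the splice of $C_{i+1}$ occurs inside the block of $C_i$ at the gap that already carries $\beta$.

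Periodic nodes are assumed to be leaves, except possibly at the root. A periodic leaf contributes only its aperiodic prefix, and the counting still works because its necklace class has size equal to the period. For a periodic root, the change index is not unique. We handle this by choosing, for each child, the occurrence of the changed position within the first period, and we check that the splices still land in the aperiodic prefix. Finally, the total length is $\sum$ of the class sizes, which equals $|\bS|$, and every string appears, so $\mathcal{U}_T$ is a universal cycle with successor rule $h$.
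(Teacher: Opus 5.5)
The paper does not prove this statement; it imports it as Theorem~3 of \cite{concattree}. Your overall shape (induction on subtrees with a boundary invariant) is the right one, but the outline has a genuine gap, located exactly in the non-binary case the Chain Property exists to handle. The only concrete form you give of the invariant is that $\mathcal{U}_{T_u}$ is cut so that $\mathtt{x}\beta$ ends it and wraps to its start. Your inductive step then splices by swapping the successors of the tree-edge pair $(\mathtt{a}_{c'}\beta,\mathtt{x}\beta)$. Both claims fail as soon as $u$ has a child with its own change index.

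Take the tree for $\Sigma_5(3,4)$ in Figure~\ref{fig:tree}. There $u=\mathtt{011}$ has change index $2$ and parent $\mathtt{001}$, joined by $(\mathtt{010},\mathtt{110})$, so $\beta=\mathtt{10}$ and $\mathtt{x}=\mathtt{1}$. Its RCL block is $\mathcal{U}_{T_u}=\mathtt{0111021031}$. Its last two symbols followed by its first give $\mathtt{310}$, not $\mathtt{110}$, and inside the block $\mathtt{110}$ is followed by $\mathtt{2}$. Swapping the successors of $\mathtt{010}$ and $\mathtt{110}$ would give $\mathtt{010}\mapsto\mathtt{2}$ and $\mathtt{110}\mapsto\mathtt{0}$, whereas $h$ gives $\mathtt{1}$ and $\mathtt{2}$. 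The splice the concatenation actually realizes exchanges the successors of $\mathtt{010}$ and $\mathtt{310}$, i.e., of $\mathtt{a}_{c'}\beta$ and $\mathtt{x}_m\beta$, where $\mathtt{031}$ is the bottom of the chain $\mathtt{001}\to\mathtt{011}\to\mathtt{021}\to\mathtt{031}$. So your ``separate small induction along the chain'' is not a supplement; it contradicts the main step as you wrote it.

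The invariant that works is: if $u=\mathtt{u}_1\cdots\mathtt{u}_n$ has change index $c'$, then $\mathcal{U}_{T_u}$ has prefix $\mathtt{u}_1\cdots\mathtt{u}_{c'}$ and suffix $\mathtt{u}_{c'+1}\cdots\mathtt{u}_n$. The window wrapping around is then $\mathtt{x}_m\beta$, followed by $\mathtt{u}_{c'}$. Chains take care of themselves, because a child with change index $c'$ is necessarily the last left child.

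Second, the step you call the heart of the argument is asserted rather than proved, and as phrased it does not describe what RCL does. RCL never inserts a block inside the aperiodic prefix of $v$: right subtrees are written before it and left subtrees after it. Yet a right child's gap (between positions $j-1$ and $j$, with $j>c$) lies in the interior of $\mathtt{a}_1\cdots\mathtt{a}_n$. The prefix/suffix invariant is what reconciles the two. Writing $\mathcal{U}_{T_u}=\mathtt{a}_1\cdots\mathtt{a}_{j-1}M\mathtt{a}_{j+1}\cdots\mathtt{a}_n$, one has literally
$\mathcal{U}_{T_u}\cdot\mathtt{a}_1\cdots\mathtt{a}_n=\mathtt{a}_1\cdots\mathtt{a}_{j-1}\cdot(M\mathtt{a}_{j+1}\cdots\mathtt{a}_n\mathtt{a}_1\cdots\mathtt{a}_{j-1})\cdot\mathtt{a}_j\cdots\mathtt{a}_n$,
and appending is the same insertion up to rotation. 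You must then check three things:
\begin{itemize}
\item ordering right children by increasing change index leaves each right block with a left context ending in $\mathtt{a}_j\cdots\mathtt{a}_n$ and a right context beginning with $\mathtt{a}_1\cdots\mathtt{a}_j$ (decreasing order would break this);
\item the analogous condition holds for the left children written after the aperiodic prefix;
\item the whole block of $v$ again has prefix $\mathtt{a}_1\cdots\mathtt{a}_c$ and suffix $\mathtt{a}_{c+1}\cdots\mathtt{a}_n$, which is exactly where ``right $>c$, left $\le c$'' is used.
\end{itemize}
Finally, a periodic leaf with period $p<n-1$ contributes a block shorter than these boundary strings, so some windows span three blocks. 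The invariant has to be stated for the periodic extension and checked against the neighbouring blocks; your counting remark does not address this.
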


%=====================================================================
%=====================================================================
%=====================================================================
\section{Bounded-weight de Bruijn sequences} \label{sec:bound}

A universal cycle for $\Sigma_t(n)$ is known as a \defo{de Bruijn sequence}.  A universal cycle for the subsets of $\Sigma_t(n)$ where there is an upper or lower bound on the weight is called a \defo{bounded-weight de Bruijn sequence}.  The lexicographically smallest binary de Bruijn sequence with a lower bound on the weight can be constructed in $\mathcal{O}(1)$ amortized time per symbol~\cite{weight1}.  The algorithm is generalized to non-binary alphabets in~\cite{generalize-classic-greedy} by providing an equivalent greedy construction.  
There are eight $\mathcal{O}(n)$ time successor-rule constructions for bounded-weight de Bruijn sequences over an arbitrary alphabet given in~\cite{karyframework}: four with a lower bound on the weight and four with an upper bound on the weight.\footnote{Na\"{i}vely, each successor rule requires up to $t$ necklace tests, each requiring $\mathcal{O}(n)$ time.  However applying optimizations similar to the upcoming proof of Theorem~\ref{thm:MSR}, these successor rules can be implemented with a constant number of necklace tests.} 
Each of the above algorithms requires $\mathcal{O}(n)$ space. 
We are interested in universal cycles for $\Sigma_t(n,w)$, which are strings with an upper bound of $w$ on the weight.

In this section we focus on one of the successor rules from~\cite{karyframework} that constructs a universal cycle for $\Sigma_t(n,w)$ with interesting properties. 
Since $\Sigma_t(n,w)$ is closed under rotation, the PCR partitions the set into necklace cycles. These cycles (nodes) can be represented by the necklaces in $\bN_t(n,w)$.  
For example, the cycles for $\Sigma_5(3,4)$ can be represented by the following necklaces as they are listed in colex order (this order will be useful later):
%
%\begin{equation} \label{eq:colex}
 \[   \bN_5(3,4) =  \{ \tt{000, 001, 011, 111, 021, 031, 002, 012, 112, 022, 003, 013, 004} \}. \]
%\end{equation}
%
The following parent rule\footnote{This rule was called \emph{first non-1} in~\cite{karyframework} since the alphabet considered was $\{1,2, \ldots ,t\}$.} defines a cycle-joining tree rooted at $\tt{0}^n$ for the  cycles represented by the necklaces in $\bN_t(n,w)$. 

%
%=================
\begin{result} \noindent
{\bf First non-zero parent rule (with root $0^{n}$)}: the parent of non-root node $\tt{a}_1\tt{a}_2\cdots \tt{a}_{n}$, where $j$ denotes the smallest index such that $\tt{a}_j \neq 0$, is 
%$\tt{a}_1\cdots \tt{a}_{j-1}(\tt{a}_j{-}1)\tt{a}_{j+1}\cdots \tt{a}_{n} = \tt{0}^{j-1}(\tt{a}_j{-}1)\tt{a}_{j+1}\cdots \tt{a}_{n}$.  
$\tt{0}^{j-1}(\tt{a}_j{-}1)\tt{a}_{j+1}\cdots \tt{a}_{n}$.  
\end{result}
%=================
%
\noindent
As an example, the cycle-joining tree for $\Sigma_5(3,4)$  based on this \emph{first non-zero} parent rule is shown in Figure~\ref{fig:tree}.
%
%=================
\begin{figure} [h]
\begin{center}
  \includegraphics[scale=0.95]{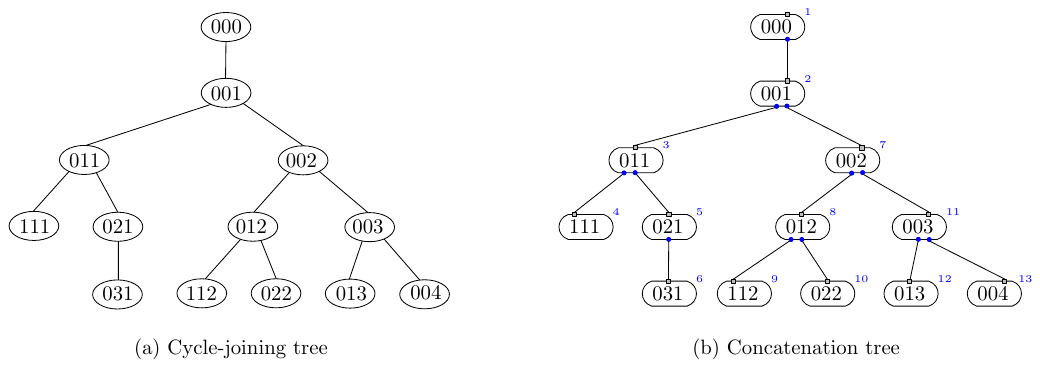}
  \caption{ (a) The first non-zero cycle-joining tree for $\Sigma_5(3,4)$ where the nodes are represented by the necklaces in $\bN_5(3,4)$. (b) A corresponding concatenation tree where the change index of the root is assigned to the final symbol, as indicated by the small box on top of the node. All children are left children, and the labels on the nodes do not need to change.  An RCL traversal of this tree visits the necklaces in colex order. }
  \label{fig:tree}
\end{center}
\end{figure}
%=================
%
%In the binary case, a cycle-joining tree yields a unique successor rule for the corresponding universal cycle.  However, for non-binary alphabets, different universal cycles are possible depending on the order that the cycles are joined (see an example in~\cite{concattree}). 
%

%The following results regarding the cycle-joining tree induced by the  first non-zero parent rule will be applied when we prove the main result of this section.
%

\begin{lemma} \label{lem:props}
Let $\mathbb{T}$ be a PCR-based cycle-joining tree induced by the  first non-zero parent rule and let $\alpha = \tt{0}^{i-1}\tt{x}\tt{a}_{i+1}\cdots \tt{a}_n$ and
$\beta = \tt{0}^{j-1}\tt{y}\tt{b}_{j+1}\cdots \tt{b}_n$ be two distinct nodes (necklaces) in $\mathbb{T}$ where $i,j \leq n$ and $\tt{x},\tt{y} > 0$.  Then the following hold. 
\begin{enumerate}

    \item If $\alpha \neq \tt{0}^{n-1}\tt{1}$ then the parent of $\alpha$ is an aperiodic necklace. (All non-root periodic nodes are leaves.)
        \item $\mathbb{T}$ has the Chain Property.
    \item If $\alpha$ and $\beta$ are periodic then they have different parents.
\end{enumerate}
\end{lemma}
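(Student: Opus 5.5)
The plan is to reduce all three parts to the concrete form of the conjugate pair joining a node to its parent. For a non-root necklace $\alpha = \tt{0}^{i-1}\tt{x}\tt{a}_{i+1}\cdots\tt{a}_n$, the parent is $p = \tt{0}^{i-1}(\tt{x}{-}1)\tt{a}_{i+1}\cdots\tt{a}_n$. Rotating both strings to start at position $i$ gives the joining conjugate pair
$$(\tt{x}\,\tt{a}_{i+1}\cdots\tt{a}_n\tt{0}^{i-1},\ (\tt{x}{-}1)\tt{a}_{i+1}\cdots\tt{a}_n\tt{0}^{i-1}).$$
Its common suffix $\tt{a}_{i+1}\cdots\tt{a}_n\tt{0}^{i-1}$ is the rotation of $p$ starting at position $i+1$ with its last symbol deleted. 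Each part will use this suffix together with the necklace property of $\alpha$ or of $p$.

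\emph{Part 1.} Suppose $p$ is periodic, say $p=\gamma^m$ with $m\ge 2$.
\begin{itemize}
\item If $p=\tt{0}^n$, then $\alpha=\tt{0}^{i-1}\tt{1}\tt{0}^{n-i}$. This is a necklace only when $i=n$, which gives exactly the excluded node $\tt{0}^{n-1}\tt{1}$.
\item Otherwise $\gamma$ is not all zeros. Since $\gamma$ is a prefix of $p$, its first nonzero position is also $i$, so $i\le|\gamma|$. Then $\alpha$ is obtained from $\gamma^m$ by incrementing position $i$ inside the first copy only. The rotation of $\alpha$ beginning at the last copy of $\gamma$ starts with $\tt{0}^{i-1}(\tt{x}{-}1)$, while $\alpha$ starts with $\tt{0}^{i-1}\tt{x}$. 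This needs $\tt{x}-1>0$ so that the block is actually nonzero; if $\tt{x}=1$ I will compare the longer zero prefix of $p$ in the same way. In either case that rotation is lexicographically smaller than $\alpha$, contradicting that $\alpha$ is a necklace.
\end{itemize}

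\emph{Part 2 (Chain Property).} Let $\alpha,\alpha'$ be two distinct children of the same parent $p$, with first nonzero indices $i$ and $i'$, and suppose their conjugate pairs share the same suffix.
\begin{itemize}
\item If $i=i'$, the suffix together with $i$ determines the child completely, so $\alpha=\alpha'$, a contradiction.
\item If $i<i'$, the two suffixes are the rotations of $p$ starting at positions $i+1$ and $i'+1$, each with its last symbol removed. Two rotations of $p$ have equal weight, so if they agree in their first $n-1$ symbols they agree entirely.
\item Hence $p$ equals its own rotation by $i'-i$ with $0<i'-i<n$, so $p$ is periodic.
\item By Part 1, the only periodic node with children is the root $\tt{0}^n$. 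It has the single child $\tt{0}^{n-1}\tt{1}$, since as shown above $\tt{0}^{i-1}\tt{1}\tt{0}^{n-i}$ is a necklace only for $i=n$. So two distinct children cannot occur, and the contradiction completes Part 2.
\end{itemize}

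\emph{Part 3.} Suppose distinct periodic $\alpha$ and $\beta$ share a parent $p$. Neither equals the aperiodic string $\tt{0}^{n-1}\tt{1}$, so $p$ is aperiodic by Part 1. If $i=j$, then $\alpha=\beta$, since each is recovered from $p$ by incrementing position $i=j$; so assume $i<j$.
\begin{itemize}
\item Because $p$ has prefix $\tt{0}^{j-1}$, position $i$ of $p$ is $\tt{0}$, so $\tt{x}=1$.
\item Since $\alpha$ is a periodic necklace starting with $\tt{0}^{i-1}\tt{1}$, every maximal cyclic run of zeros in $\alpha$ has length at most $i-1$.
\item The string $p$ differs from $\alpha$ only at position $i$. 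Thus $p$ has exactly one cyclic zero run of length at least $i$, namely the one through position $i$.
\item On the other hand, $\beta=\tt{0}^{j-1}\tt{y}\cdots$ is periodic with period longer than $j-1$, since its period begins with $\tt{0}^{j-1}\tt{y}$. So $\beta$ contains a second run $\tt{0}^{j-1}$ at the start of its second period, disjoint from position $j$. That run survives unchanged in $p$, giving $p$ two zero runs of length at least $j-1\ge i$, a contradiction.
\end{itemize}

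The main obstacle I expect is this run-counting argument in Part 3. It must be stated cyclically and must handle boundary cases, such as the first and last zero runs of $p$ merging across the wraparound. It also needs care to check that the second period of $\beta$ really avoids position $j$ and the run through position $i$.
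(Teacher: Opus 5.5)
\textbf{Gap in Part 3.} The failing step is ``that run survives unchanged in $p$, giving $p$ two zero runs of length at least $j-1$.'' The zeros of $\beta$'s second block do survive in $p$. But they need not form a run \emph{separate} from the run through position $i$. When $\tt{y}=\tt{1}$, decrementing position $j$ puts a $\tt{0}$ there, and the two stretches can merge. For instance, $\beta=(\tt{0}^{j-1}\tt{1})^2$ (say $\beta=\tt{0101}$) is a periodic necklace whose parent $p=\tt{0}^{2j-1}\tt{1}$ has only one zero run. Counting runs of $p$ therefore gives no contradiction; this is one of the cases you flagged and left open.

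The repair is to work in $\alpha$ rather than in $p$. Let $q\ge j$ be the period of $\beta$. The block $\tt{0}^{j-1}$ occupies positions $q{+}1,\ldots,q{+}j{-}1$ of $\beta$. Since $\alpha$ agrees with $\beta$ except at positions $i,j\le q$, the same block lies in $\alpha$. As $j-1\ge i$, the rotation of $\alpha$ starting at position $q+1$ begins with $\tt{0}^{i}$. This is smaller than the prefix $\tt{0}^{i-1}\tt{1}$ of $\alpha$ (equivalently, it violates your own bound on the zero runs of $\alpha$), a contradiction. This is essentially the paper's argument. It uses neither the periodicity of $\alpha$ nor the aperiodicity of $p$.

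\textbf{Omission in Part 1.} The lemma asserts that the parent is an aperiodic \emph{necklace}, and you only prove aperiodicity. The necklace property does not come for free from the parent rule; it is what makes the rule well-defined on $\bN_t(n,w)$. It follows quickly, though. We have $\tt{a}_n>\tt{0}$, and every cyclic zero run of $\alpha$ has length at most $i-1$. If $\tt{x}>\tt{1}$, every other rotation of $p$ either starts with fewer than $i-1$ zeros or continues after $\tt{0}^{i-1}$ with a symbol at least $\tt{x}>\tt{x}{-}\tt{1}$. If $\tt{x}=\tt{1}$, the leading zero run of $p$ is strictly longer than all its other zero runs. Either way $p$ is strictly smaller than each of its nontrivial rotations, which gives the necklace property and aperiodicity at once; this also makes your $\gamma^m$ case analysis unnecessary.

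\textbf{Part 2.} Your argument is correct, though more roundabout than the paper's. The paper simply observes that the two common suffixes end in $\tt{a}_n\tt{0}^{i-1}$ and $\tt{b}_n\tt{0}^{j-1}$ with $\tt{a}_n,\tt{b}_n>\tt{0}$. Hence they differ when $i\neq j$, with no appeal to Part 1.
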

\begin{proof}
{\bf (1)} Since $\alpha$ is a necklace, its prefix $\tt{a}_1\cdots \tt{a}_i = \tt{0}^{i-1}\tt{a}_i$ is lexicographically less than or equal to every other substring in $\alpha$ of the same length when $\alpha$ is considered cyclically. Its parent $\gamma$ has prefix $\tt{0}^{i-1}(\tt{a}_i-1)$
which is strictly less than all other substrings in $\gamma$ considered cyclically, which means $\gamma$ is a necklace. Furthermore, this also implies $\gamma$ is aperiodic as long as $\gamma \neq \tt{0}^n$, which is the case only when $\alpha = \tt{0}^{n-1}1$.

{\bf (2-3)}
Suppose $\alpha$ and $\beta$ have the same parent $\gamma$.  If $i=j$ then $\alpha = \beta$, a contradiction to $\alpha$ and $\beta$ being distinct. Since they are necklaces where $\tt{x},\tt{y} > 0$, we have $\tt{a}_n, \tt{b}_n > \tt{0}$. Without loss of generality, suppose $i < j$.  The strings in the conjugate pair joining $\alpha$ and $\gamma$ have suffix $\tt{a}_n\tt{0}^{i-1}$ and the strings in the conjugate pair joining $\beta$ and $\gamma$ have suffix $\tt{b}_n\tt{0}^{j-1}$.  Thus, the Chain Property is satisfied. Furthermore, suppose $\alpha$ and $\beta$ are both periodic. Since $\beta$ is periodic, it must be that $\gamma$ has a substring $0^{j-1}$ appearing after the $j$-th symbol.  But this implies that $\alpha$ also has the same substring since  $\alpha$ and $\beta$ share the same parent. This contradicts the fact that $\alpha$ is a necklace.
Thus $\alpha$ and $\beta$ have different parents.  
\end{proof}

The following successor rule based on the first non-zero cycle-joining tree corresponds to the successor rule $g_4$ in~\cite{karyframework}. In  its original presentation, the alphabet under consideration is $\{\tt{1,2, \ldots ,t}\}$ whereas we state the rule for the alphabet $\Sigma_t = \{\tt{0,1, \ldots ,t{-}1}\}$.
It is also a space-efficient implementation of the generic successor rule $h(\alpha)$ presented in Section~\ref{sec:cycle-join} that does not require storing the set of conjugate pairs.

%========================================================
% GRANDMAMA SUCCESSOR RULE
%========================================================
\begin{result} 
\noindent {\bf First non-zero (Grandmama)  successor rule} for $\alpha = \tt{a}_1\tt{a}_2\cdots \tt{a}_{n}$ 

\smallskip

\noindent
Let $j$ be the largest index of $\tt{a}_2\tt{a}_3\cdots \tt{a}_{n}$ such that $\tt{a}_j \ne \tt{0}$ or $j=1$ if no such index exists.
Let $\tt{x}$ be the largest symbol in $\{\tt{1,2, \ldots, t{-}1}\}$ such that $\tt{0}^{{n}-j}   \tt{x}  \tt{a}_2\cdots \tt{a}_{j}$ is a necklace and $\wt(\tt{x}\tt{a}_2\tt{a}_3\cdots \tt{a}_{{n}}) \leq w$,
or let $\tt{x}=-1$ if no such symbol exists.

\begin{center}  \small
$h_1(\alpha) = \left\{ \begin{array}{ll}
          0 		&\ \   \mbox{if  $\tt{x} \ne -1$ and $\tt{a}_1 = \tt{x}$;}\\
         \tt{a}_1{+}1	&\ \  \mbox{if  $\tt{x} \ne -1$ and $\tt{a}_1 < \tt{x}$;}   \ \ \ \ \ \ \ \ \\
         \tt{a}_1 \  	&\ \  \mbox{otherwise.}\end{array} \right.$
\end{center}

\vspace{-0.1in}
\end{result}
%========================================================
%
\noindent
Let $\mathcal{U}_t(n,w)$ denote the universal cycle for $\Sigma_t(n,w)$ obtained by starting with $\tt{0}^n$ and repeatedly applying the successor rule $|\Sigma_t(n,w)|-n$ times on the last $n$ symbols to obtain the next symbol in the sequence.  For example:
\[ \mathcal{U}_5(3,4) = \tt{\underline{0 \cdot 00}1 \cdot 011 \cdot 1 \cdot 021 \cdot 031 \cdot 002 \cdot 012 \cdot 112 \cdot 022 \cdot 003 \cdot 013 \cdot 004.} \]
Observe that this sequence has an interesting property: it corresponds to concatenating the aperiodic prefixes of the necklaces in $\bN_5(3,4)$ as they appear in colex order.  
This is not surprising since when there is no bound on the weight, i.e., $w \geq n(t{-}1)$, the first non-zero successor rule is equivalent to the successor rule for the \defo{Grandmama de Bruijn sequence}~\cite{grandma2}.  The Grandmama sequence can also be described by the following very simple concatenation scheme:
\begin{quote}
\emph{Concatenate together the aperiodic prefixes of the necklaces in $\bN_t(n)$ as they appear in colex order.} 
\end{quote}
Applying Theorem~\ref{thm:concat}, this concatenation construction can be generalized to the set $\Sigma_t(n,w)$.  
Let $\mathbb{T}$ denote the cycle-joining tree for $\Sigma_t(n,w)$ defined by the first non-zero parent rule, where the cycles (nodes) are induced by the PCR -- they correspond to the necklaces in $\bN_t(n,w)$.  
%The first non-zero  successor rule $h_1(\alpha)$ used to construct $\mathcal{U}_t(n,w)$ is a space efficient implementation of $h(\alpha)$. 
From Lemma~\ref{lem:props}, $\mathbb{T}$ satisfies the Chain Property and all the non-root periodic nodes are leaves.  Thus, let $T$ be the concatenation tree derived from $\mathbb{T}$, where the change index of the root node $\tt{0}^n$ is the index of the rightmost symbol. Define the label of the single child of the root to be $\tt{0}^{n-1}\tt{1}$. This accounts for the one ambiguity in our simplified concatenation tree definition and this choice abides by the conditions from the original definition in~\cite{concattree}.  Recall that the labels of all other nodes are defined recursively based on the label of their parent.  It follows from Lemma~\ref{lem:props}, that the labels of all other nodes are necklaces.

\begin{theorem} \label{thm:main}
For $n,t \geq 1$ and $w \geq 0$, the universal cycle $\mathcal{U}_t(n,w)$ for $\Sigma_t(n,w)$ can be constructed by concatenating the aperiodic prefixes of the necklaces in $\bN_t(n,w)$ as they appear in colex order. 
\end{theorem}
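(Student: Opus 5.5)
The plan is to apply Theorem~\ref{thm:concat} to the concatenation tree $T$ just described. By Lemma~\ref{lem:props}, the first non-zero cycle-joining tree $\mathbb{T}$ satisfies the Chain Property and all of its non-root periodic nodes are leaves, so $T$ is a valid concatenation tree. Theorem~\ref{thm:concat} then says that $\mathcal{U}_T$ is a universal cycle for $\Sigma_t(n,w)$ with successor rule $h(\alpha)$. Since $h_1$ is the space-efficient implementation of $h$ for this tree (this is $g_4$ from~\cite{karyframework}), and $\mathcal{U}_T$ begins with $\tt{0}^n$ once rotated, $\mathcal{U}_T$ coincides with $\mathcal{U}_t(n,w)$. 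What remains is the combinatorial claim in the figure caption: every label in $T$ is the node's necklace, and the RCL traversal visits the nodes of $\bN_t(n,w)$ in colex order.

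For the labels, I argue by induction from the root. The root $\tt{0}^n$ has change index $n$, and its single child has label $\tt{0}^{n-1}\tt{1}$ by definition. A necklace $\alpha = \tt{0}^{i-1}\tt{a}_i\cdots\tt{a}_n$ with $\tt{a}_i>0$, $i<n$, has parent $\gamma = \tt{0}^{i-1}(\tt{a}_i-1)\tt{a}_{i+1}\cdots\tt{a}_n$. Both are necklaces (Lemma~\ref{lem:props}(1)), and they differ exactly at position $i$. So if $\gamma$ is labeled by itself, $\alpha$ is labeled by itself with change index $i$. I also need the conjugate pair joining them to be consistent with index $i$ in the label convention of~\cite{concattree}; I expect this to follow from the unique-change-index remark preceding Theorem~\ref{thm:concat}.

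Next I show every child is a left child. The change index of $\alpha$ is the position $i$ of its first non-zero symbol. The first non-zero of $\gamma$ is at position $i$ if $\tt{a}_i>1$, and otherwise at some position $>i$ (or $\gamma=\tt{0}^n$, whose index is $n$). Either way, the parent's change index is at least $i$, so there are no right children. The RCL order therefore becomes: visit the current node, then recursively visit its children in increasing order of change index. In other words, it is a preorder traversal in which a node's children are listed with the smaller change index first.

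The main step, and the one I expect to be the real obstacle, is proving that this preorder equals colex order. My approach is to show two things. First, each subtree rooted at $\gamma$ consists of a contiguous block of necklaces in colex order that begins at $\gamma$. Second, $\gamma$ is colex-smaller than all of its descendants. The second point holds because a descendant is obtained by increasing symbols at positions up to the first non-zero position of $\gamma$ while keeping the suffix after it fixed, so it is colex-larger. For sibling subtrees with change indices $i<i'$, every descendant of the $i$-child agrees with $\gamma$ from position $i+1$ onward. The $i'$-child, however, has a larger symbol at position $i' > i$. Comparing from the right, every node in the first subtree is therefore colex-smaller than every node in the second. Induction on subtree height then gives that the preorder lists each subtree as a colex-contiguous interval.

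It remains to check that no necklace of $\bN_t(n,w)$ is missed or out of place, and this follows because $\mathbb{T}$ spans $\bN_t(n,w)$. Concatenating the aperiodic prefixes in this order therefore yields $\mathcal{U}_t(n,w)$.
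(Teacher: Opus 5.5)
Your proposal is correct and follows the paper's proof: apply Theorem~\ref{thm:concat} to the first non-zero concatenation tree, observe that every child is a left child so the RCL order is a preorder, and conclude that this preorder is colex because symbols only increase from parent to child. Your sibling-subtree comparison makes explicit what the paper's one-line colex justification leaves implicit: every node below the change-index-$i$ child agrees with the parent at positions $>i$, while every node below the change-index-$i'$ child is strictly larger at position $i'>i$. Your label argument, which you stated only for $i<n$, applies verbatim to the remaining nodes $\tt{0}^{n-1}\tt{x}$.
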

\begin{proof}
By Theorem~\ref{thm:concat}, the sequence obtained by concatenating the aperiodic prefixes of the nodes in $T$ as they appear RCL order has successor rule $h_1(\alpha)$, which generates $\mathcal{U}_t(n,w)$ starting with $\tt{0}^n$. 
Each child of a node is a left child.  As a result, the RCL-ordering of $T$ corresponds to a pre-order traversal of $T$.  Since the symbol changed between a parent and child always increases from parent to child, this ordering will list the nodes (the necklaces in $\bN_t(n,w)$) in colex order.    %To obtain the running time result, it is relatively straightforward to apply Theorem ?? from~\cite{concattree}.  
\end{proof}

By traversing the concatenation tree $T$ in RCL order we can generate $\mathcal{U}_t(n,w)$.  To do this efficiently, we dynamically compute the children of each node as we traverse $T$, starting from the root $\tt{0}^n$.
Consider a node $\alpha = \tt{a}_1\tt{a}_2\cdots \tt{a}_n$ with change index $c$ and weight $w'$.
If $w' = w$, then $\alpha$ has no children.  Let $\alpha_i$ denote the string $\alpha$ with the symbol at index $i$ incremented by 1. If $\alpha_i$ is not a necklace for $1 < i \leq c$ then clearly $\alpha_{i-1}$ is also not a necklace.  If $\tt{a}_c < k-1$, we test if $\alpha_c$ is a necklace. If it is not a necklace, then $\alpha$ has no children.
Otherwise, we test $\alpha_{c-1}$, then $\alpha_{c-2}$, and so on until we find the largest index $i\leq c$ such that $\alpha_i$ is not a necklace or $i=0$.  The children of $\alpha$ are thus $\alpha_{i+1}, \cdots , \alpha_{c-1}$, including $\alpha_c$ if $\tt{a}_c < k-1$.  Thus, starting from the root $\tt{0}^n$ we can dynamically create the tree $T$ and visit its nodes in RCL order.   Testing if a string is a necklace and also determining the length of the aperiodic prefix of a necklace can be computed in $\mathcal{O}(n)$ time~\cite{Booth}. Each successful necklace test can be assigned to the corresponding child, and if there is a failed necklace test it can be assigned to the current node.  Thus, the nodes can be visited in $\mathcal{O}(n)$ amortized time.  
A complete C implementation of this RCL traversal algorithm to construct $\mathcal{U}_t(n,w)$ is given in the Appendix.

\begin{theorem}
$\mathcal{U}_t(n,w)$ can be constructed in $\mathcal{O}(1)$ amortized time per symbol using $\mathcal{O}(nt)$ space.
\end{theorem}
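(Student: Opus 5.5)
The plan is to bound the total work of the RCL traversal described just before the statement by a constant times the length of the output. The output length is $|\Sigma_t(n,w)|$. By Theorem~\ref{thm:main} it equals the sum, over the necklaces in $\bN_t(n,w)$, of their aperiodic prefix lengths.

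First, I would establish that the traversal spends $\mathcal{O}(n)$ amortized time per node of $T$. When a node $\alpha$ with change index $c$ is visited, it runs a sequence of necklace tests on $\alpha_c, \alpha_{c-1}, \ldots$. Every successful test produces a child, so it is charged to that child. At most one test per node fails, and it is charged to $\alpha$ itself. Each test, and the computation of the aperiodic prefix length, takes $\mathcal{O}(n)$ time~\cite{Booth}. Outputting the aperiodic prefix costs at most $n$ more. So each node receives $\mathcal{O}(n)$ charge, and the total running time is $\mathcal{O}(n\cdot|\bN_t(n,w)|)$.

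The main obstacle is that a periodic necklace outputs fewer than $n$ symbols while still costing $\Theta(n)$. Write $A$ for the set of aperiodic necklaces in $\bN_t(n,w)$ and $P$ for the set of periodic ones. I would handle $P$ using Lemma~\ref{lem:props}.
\begin{itemize}
\item By part~(1), the parent of every non-root periodic node other than $\tt{0}^{n-1}\tt{1}$ is aperiodic. Note that $\tt{0}^{n-1}\tt{1}$ is itself aperiodic for $n\ge 2$, and the case $n=1$ is trivial.
\item By part~(3), distinct periodic nodes have distinct parents.
\end{itemize}
Hence mapping each non-root periodic node to its parent is an injection into $A$, so $|P| \le |A| + 1$, where the $+1$ accounts for the root $\tt{0}^n$. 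Every aperiodic necklace contributes exactly $n$ symbols to the output. Therefore
\[ n\cdot|\bN_t(n,w)| = n(|A|+|P|) \le n(2|A|+1) \le 2\,|\Sigma_t(n,w)| + n. \]
The total work is thus $\mathcal{O}(|\Sigma_t(n,w)| + n)$, which is $\mathcal{O}(1)$ amortized per symbol. The additive $n$ is absorbed because $|\Sigma_t(n,w)|\ge n$ whenever the sequence is nontrivial; the degenerate cases $w=0$ or $t=1$ are handled directly.

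For the space bound, I would keep a single array for the current string $\alpha$ and modify it in place. To visit a child $\alpha_i$, increment $\tt{a}_i$, recurse, and decrement on return, so only $\mathcal{O}(1)$ data (the change index and loop index) is stored per recursion frame. Each step down the tree increases the weight by one, so the recursion depth is at most $w$. We may assume $w \le n(t-1)$, since larger $w$ gives the same set. This gives $\mathcal{O}(nt)$ stack space, plus $\mathcal{O}(n)$ for the string and the necklace-testing routine, for a total of $\mathcal{O}(nt)$.
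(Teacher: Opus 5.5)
Your proposal is correct and follows the paper's argument. You charge each successful necklace test to the child it produces and the one failed test to the current node, giving $\mathcal{O}(n)$ amortized time per node. You then use parts (1) and (3) of Lemma~\ref{lem:props} to map the non-root periodic necklaces injectively into the aperiodic ones, so the output has length at least about $\frac{n}{2}|\bN_t(n,w)|$, and you bound space by the $\mathcal{O}(nt)$ recursion depth with in-place updates. Your explicit treatment of the root's additive $n$ term and of the degenerate cases $w=0$, $t=1$ is a small tightening that the paper leaves implicit.
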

\begin{proof}
Recall from Lemma~\ref{lem:props} that the parent of every non-root periodic necklace is both unique and aperiodic. Discounting the root, this means that the number of periodic necklaces is less than or equal to the number of aperiodic necklaces. Since the aperiodic prefix of each aperiodic necklace contains $n$ symbols, this implies that $|\mathcal{U}_t(n,w)| \geq \frac{n}{2}|\bN_t(n,w)-\{\tt{0}^n\}|$.  Since each necklace in the concatenation tree $T$ is visited in $\mathcal{O}(n)$ amortized time, this implies that $\mathcal{U}_t(n,w)$ can be generated in $\mathcal{O}(1)$ amortized time per symbol.  
When visiting each node $\alpha$, only a constant amount of extra memory is required if we update and restore the values in $\alpha$ to test if a given $\alpha_i$ is a necklace and to visit each child. Since the depth of the recursion is bounded by $nt$, this construction requires $\mathcal{O}(nt)$ space for the run-time stack.
\end{proof}

%Using a recursive algorithm that builds strings from right to left, it is possible to generate the desired necklaces in colex order using only $\mathcal{O}(n)$ space using similar techniques as given in X for the binary case.

\subsection{Fixed weight and weight range}

Some applications may require both an upper and lower bound on the weights.
Binary \defo{fixed weight de Bruijn sequences} can be constructed by considering a shorthand representation~\cite{fixedweight2}. The shorthand strings correspond to all binary strings of length $n-1$ with weight in the range $[w-1,w]$.  This result is applied in ~\cite{dbrange} to construct 
binary \defo{weight-range de Bruijn sequences} for binary strings with weight in an arbitrary range $[w_1,w_2]$. Both constructions generate the sequences in $\mathcal{O}(1)$ amortized time per symbol using $\mathcal{O}(n)$ space.  There is no published construction for weight-range de Bruijn sequences over non-binary alphabets. As a special case, fixed-weight de Bruijn sequences for strings in $\Sigma_t(n)$ with weight exactly $w$ using a shorthand representation correspond to weight-range de Bruijn sequences for strings in $\Sigma_t(n-1)$ with weight in the range $[max(0,w-t+1), w]$.  If $w < t$, then such sequences correspond to bounded-weight de Bruijn sequences with an upper bound on the weight of $w$. 

\begin{theorem}
The universal cycle $\mathcal{U}_t(n,w)$ is a fixed-weight de Bruijn sequence using shorthand representatives when $w < t$.  Moreover, the sequence $\mathcal{U}_t(n,w)$ with the $\tt{0}^n$ substring replaced with $\tt{0}^{n-1}$ is a fixed-weight de Bruijn sequence using shorthand representatives when $w = t$.
\end{theorem}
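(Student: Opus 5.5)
The plan is to reduce both claims to a comparison of sets of strings, using the correspondence described just before the statement. A fixed-weight de Bruijn sequence with shorthand representatives, for strings in $\Sigma_t(n+1)$ of weight exactly $w$, is a universal cycle for those $\alpha\in\Sigma_t(n)$ whose missing symbol $\tt{z}=w-\wt(\alpha)$ lies in $\Sigma_t$. Equivalently, these are the strings in $\Sigma_t(n)$ with $\max(0,w-t+1)\le \wt(\alpha)\le w$. So for each case I will identify this set and check that the stated sequence contains each of its elements exactly once as a cyclic length-$n$ substring.

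\emph{Case $w<t$.} Here $w-t+1\le 0$, so the set of shorthand representatives is exactly $\Sigma_t(n,w)$. Every $\alpha\in\Sigma_t(n,w)$ then has missing symbol $w-\wt(\alpha)\in\{\tt{0},\ldots,\tt{w}\}\subseteq\Sigma_t$. By Theorem~\ref{thm:main} (or directly from the successor rule $h_1$), $\mathcal{U}_t(n,w)$ is a universal cycle for $\Sigma_t(n,w)$, so it is a fixed-weight de Bruijn sequence. Nothing more is needed in this case.

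\emph{Case $w=t$.} The representatives are the strings in $\Sigma_t(n)$ with weight in $[1,t]$, that is, $\Sigma_t(n,t)\setminus\{\tt{0}^n\}$. The string $\tt{0}^n$ is excluded because its missing symbol would be $\tt{t}\notin\Sigma_t$. So I must show that deleting one $\tt{0}$ from the occurrence of $\tt{0}^n$ in $\mathcal{U}_t(n,t)$ removes exactly the window $\tt{0}^n$ and leaves every other window intact. For $n\ge 2$ and $t \geq 2$, the cyclic sequence contains a nonzero symbol, and $\tt{0}^n$ occurs exactly once. Hence the maximal cyclic run of zeros containing it has length exactly $n$: a run of length $n+1$ would contain two occurrences of $\tt{0}^n$, and every other run has length less than $n$. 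Write the sequence cyclically as $\cdots \tt{a}\,\tt{0}^n\,\tt{b}\cdots$ with $\tt{a},\tt{b}\neq\tt{0}$.

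The windows touching this run, other than $\tt{0}^n$ itself, are of two kinds: those ending in $\tt{a}\tt{0}^i$ and those starting with $\tt{0}^i\tt{b}$, for $0\le i\le n-1$ (with the remaining symbols taken from outside the run). No window contains both $\tt{a}$ and $\tt{b}$, since that would require length $n+2$. After shrinking the run to $\tt{0}^{n-1}$, the windows ending in $\tt{a}\tt{0}^i$ and starting with $\tt{0}^i\tt{b}$ for $i\le n-1$ are still present with identical contents. No window contains both $\tt{a}$ and $\tt{b}$, since that would need length $n+1$, and no window equals $\tt{0}^n$. Windows disjoint from the run are unchanged. Therefore the new cyclic sequence has length $|\Sigma_t(n,t)|-1$ and contains every string of $\Sigma_t(n,t)\setminus\{\tt{0}^n\}$ exactly once.

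The main obstacle is the bookkeeping in this window argument, in particular justifying that the zero run has length exactly $n$. I will also handle the degenerate small cases separately, where the sequence has no nonzero symbol or the run wraps around the whole cycle (for example $n=1$, or $t=1$). These can be checked directly, using the explicit start $\tt{0}\cdot\tt{0}^{n-1}\tt{1}\cdots$ of the colex concatenation from Theorem~\ref{thm:main} to locate the run.
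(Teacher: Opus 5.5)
Your proposal is correct and takes the paper's route. The paper gives no separate proof and relies on the preceding observation that shorthand representatives of weight-$w$ strings are exactly the strings of $\Sigma_t(n)$ with weight in $[\max(0,w-t+1),w]$, i.e., $\Sigma_t(n,w)$ when $w<t$ and $\Sigma_t(n,t)\setminus\{\tt{0}^n\}$ when $w=t$; your window-deletion argument supplies the step it leaves implicit.

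One small tightening. Your claim that no window contains both $\tt{a}$ and $\tt{b}$ only rules out windows crossing the zero run. Ruling out windows that wrap around through the rest of the cycle needs cycle length at least $2n$, which holds since $|\Sigma_t(n,t)|\ge 1+n+\binom{n}{2}$. Alternatively, write the cycle as $\tt{0}^n C$ of length $L$: the windows of $\tt{0}^{n-1}C$ at positions $1,\ldots,L-1$ coincide with those of $\tt{0}^nC$ at positions $2,\ldots,L$, so the run-length analysis is unnecessary.
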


\subsection{Applying the MSR} \label{sec:msr}

In this section, we present another bounded-weight de Bruijn sequence construction for $\Sigma_t(n,w)$ with the added constraint that $w < t$. Recall, the strings in $\Sigma_t(n,w)$ can be thought of as shorthand representatives for the subset of $\Sigma_t(n+1)$ containing strings with fixed weight $w<t$. Thus, when considering such a set $\Sigma_t(n,w)$, the  missing symbol $\tt{z}$ for a given string in the set is well defined.
Recall also that the MSR partitions $\Sigma_t(n,w)$ into equivalence classes corresponding to necklaces in $\bN_t(n+1)$.

\begin{exam}  \label{ex:msr}
Consider $\Sigma_5(3,4)$.  The MSR partitions this set of $35$ strings into the following 10 equivalence classes (columns):

\begin{center}
\begin{tabular}{c@{\hskip 2em}c@{\hskip 2em}c@{\hskip 2em}c@{\hskip 2em}c@{\hskip 2em}c@{\hskip 2em}c@{\hskip 2em}c@{\hskip 2em}c@{\hskip 2em}c}
   $\tt{\blue{0}00}$ &   $\tt{\blue{0}01}$ &  $\tt{\blue{0}10}$ &  $\tt{\blue{0}02}$ &  $\tt{\blue{0}11}$ &  $\tt{\blue{0}20}$  &  $\tt{\blue{0}03}$ &  $\tt{\blue{0}12}$ &  $\tt{\blue{0}21}$ &  $\tt{\blue{1}11}$ \\ 
   $\tt{\blue{0}04}$ &  $\tt{\blue{0}13}$ &  $\tt{\blue{1}03}$ &  $\tt{\blue{0}22}$ &  $\tt{\blue{1}12}$ &  $\tt{\blue{2}02}$   &  $\tt{\blue{0}31}$ &  $\tt{\blue{1}21}$ &  $\tt{\blue{2}11}$ &    \\
   $\tt{\blue{0}40}$ &  $\tt{\blue{1}30}$ &  $\tt{\blue{0}30}$ &  $\tt{\blue{2}20}$ &  $\tt{\blue{1}20}$ &                      &  $\tt{\blue{3}10}$ &  $\tt{\blue{2}10}$ &  $\tt{\blue{1}10}$ &   \\
   $\tt{\blue{4}00}$ &  $\tt{\blue{3}00}$ &  $\tt{\blue{3}01}$ &  $\tt{\blue{2}00}$ &  $\tt{\blue{2}01}$ &                      &  $\tt{\blue{1}00}$ &  $\tt{\blue{1}01}$ &  $\tt{\blue{1}02}$ &    \\
\end{tabular}
\end{center}
 Each equivalence class (cycle) can be represented by a necklace in $\bN_5(4)$ with weight $w=4$.  Each necklace is highlighted by reading the first symbol down each column.  
 
 %The corresponding first-non zero cycle joining tree is illustrated below.

%\includegraphics[scale=1.10]{trees/43MSR.pdf}
 
%Observe that the labeled pre-order traversal of this tree visits the necklaces (nodes) in \emph{reverse} colex order.%

\end{exam}

%

%

%Note, they have been considered in the past for binary strings but without this extra constraint on the value of the weight~\cite{fixedweight2}. 

Using the MSR, cycle-joining trees and successor rules can be created in a similar manner as we did when applying the PCR using the \emph{first non-zero} parent rule.  In this case, the nodes have length $n+1$. By decrementing the first non-zero symbol the following symbol in the parent necklace is incremented based on the MSR, maintaining the weight constraint. 
%
%=================
\begin{result} \noindent
{\bf First non-zero (MSR)} (with root $\tt{0}^n\tt{w}$): the parent of non-root node $\tt{a}_1\tt{a}_2\cdots \tt{a}_{n+1}$, where $j$ denotes the smallest index such that $\tt{a}_j \neq 0$, is 
%$\tt{a}_1\cdots \tt{a}_{j-1}(\tt{a}_j{-}1)(\tt{a}_{j+1}{+}1)a_{j+2}\ldots \tt{a}_{n+1}$.
$\tt{0}^{j-1}(\tt{a}_j{-}1)(\tt{a}_{j+1}{+}1)a_{j+2}\ldots \tt{a}_{n+1}$.
\end{result}
%=================
%
\noindent
This parent rule is well-defined since the weight constraint implies that  $j < n+1$ and $\tt{a}_{j+1} < t-1$. 
As an example, see the cycle-joining tree for $\Sigma_k(n,w)$ in Figure~\ref{fig:treeMSR}.

%=================
\begin{figure} [h]
\begin{center}
  \includegraphics[scale=0.9]{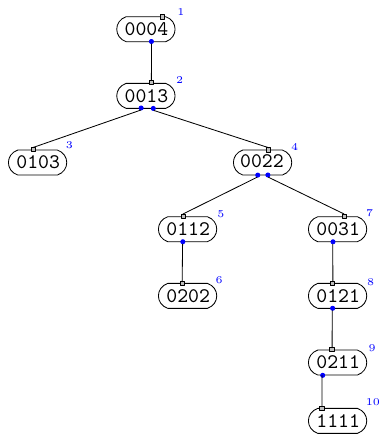}
  \caption{ The first non-zero cycle-joining tree based on the MSR for $\Sigma_5(3,4)$ where the nodes are represented by the necklaces in $\bN_5(4)$ with weight exactly $w=4$.  The labeled pre-order traversal of this tree visits the necklaces in \emph{reverse} colex order. The tree is drawn in the style of a concatenation tree even though they are not formally defined for MSR-based cycle-joining trees.}
  \label{fig:treeMSR}
\end{center}
\end{figure}
%=================

The following is a space-efficient successor rule based on the functions $g(\alpha)$ and $h(\alpha)$ (recall its definition in Section~\ref{sec:cycle-join}) for this cycle-joining tree, obtained by considering the corresponding conjugate pairs. 

%========================================================
% MSR SUCCESSOR RULE
%========================================================
\begin{result} 
\noindent {\bf First non-zero (MSR)  successor rule} for $\alpha = \tt{a}_1\tt{a}_2\cdots \tt{a}_{n}$ with missing symbol $\tt{z}$ 

\smallskip

\noindent
Let $j$ be the largest index of $\tt{a}_2\tt{a}_3\cdots \tt{a}_{n}$ such that $\tt{a}_j \ne \tt{0}$ or $j=1$ if no such index exists.
Let $\tt{x}$ be the largest symbol in $\{\tt{1,2, \ldots, t{-}1}\}$  such that (i) $\tt{y} = \tt{z}-\tt{x}+\tt{a}_1$ is in $\Sigma_t$ and  (ii) $\tt{0}^{{n}-j}   \tt{x} \tt{y} \tt{a}_2\cdots \tt{a}_{j}$ is a necklace,
or let $\tt{x}=-1$ if no such symbol exists.

\begin{center}  \small
$h_2(\alpha) = \left\{ \begin{array}{ll}
          0 		&\ \   \mbox{if  $\tt{x} \ne -1$ and $\tt{z}  = \tt{x}$;}\\
         \tt{z} {+}1	&\ \  \mbox{if  $\tt{x} \ne -1$ and $\tt{z} < \tt{x}$;}   \ \ \ \ \ \ \ \ \\
         \tt{z} \  	&\ \  \mbox{otherwise.}\end{array} \right.$
\end{center}

\vspace{-0.1in}
\end{result}
%========================================================

\noindent
Let $\mathcal{V}_t(n,w)$ denote the universal cycle for $\Sigma_t(n,w)$ obtained by starting with $\tt{0}^n$ and repeatedly applying the above successor rule $|\Sigma_t(n,w)|- n$ times on the last $n$ symbols to obtain the next symbol in the sequence.  Na\"{i}vely, in order to determine $\tt{x}$ the function $h_2(\alpha)$ will require testing up to $t$ strings to see whether they are a necklace.  
%Testing whether a string is a necklace can be done in $\mathcal{O}(n)$ time~\cite{Booth}.  
However, with some fairly straightforward optimizations, we can reduce the number of necklace tests to one.

\begin{theorem} \label{thm:MSR}
For $n,t \geq  2$ and $0 \leq w < t$, the universal cycle $\mathcal{V}_t(n,w)$ for $\Sigma_t(n,w)$ can be constructed via the successor rule $h_2(\alpha)$ in $\mathcal{O}(n)$ time per symbol using $\mathcal{O}(n)$ space.
\end{theorem}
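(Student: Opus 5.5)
Two things must be established for Theorem~\ref{thm:MSR}: that $h_2$ really is a successor rule for a universal cycle of $\Sigma_t(n,w)$, and that it can be evaluated in $\mathcal{O}(n)$ time and space.

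\textbf{Correctness.} I will realize $h_2$ as the generic rule $h(\alpha)$ of Section~\ref{sec:cycle-join}, with $f$ the MSR. This needs an MSR analogue of Lemma~\ref{lem:props}. First, the first non-zero (MSR) parent rule gives a tree on the necklaces of $\bN_t(n+1)$ of weight exactly $w$. The parent of $\tt{0}^{j-1}\tt{a}_j\cdots$ has prefix $\tt{0}^{j-1}(\tt{a}_j{-}1)$, which is strictly smaller than every other cyclic substring of that length, so the parent is a necklace. Decrementing $\tt{a}_j$ and incrementing $\tt{a}_{j+1}$ only shifts weight toward the root, so repeated application terminates at $\tt{0}^n\tt{w}$.

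Next I identify the conjugate pair joining a node to its parent. In the MSR cycle of node $\tt{0}^{j-1}\tt{a}_j\tt{a}_{j+1}\cdots\tt{a}_{n+1}$, take the length-$n$ window that omits position $j$, read cyclically: it is $\tt{a}_{j+1}\cdots\tt{a}_{n+1}\tt{0}^{j-1}$ with missing symbol $\tt{a}_j$. Shifting to the window starting one position earlier gives a pair of strings that differ only in the first symbol ($\tt{a}_j$ versus $\tt{a}_j{-}1$), while the missing symbol $\tt{a}_{j+1}$ changes to $\tt{a}_{j+1}{+}1$. This matches condition (i), $\tt{y}=\tt{z}-\tt{x}+\tt{a}_1$ with the weight preserved, and condition (ii), that the rotated node $\tt{0}^{n-j}\tt{x}\tt{y}\tt{a}_2\cdots\tt{a}_j$ is a necklace.

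The Chain Property follows as in Lemma~\ref{lem:props}(2): siblings with different $j$ give pairs whose suffixes end in different runs of zeros. Children of one node that share a suffix $\tt{a}_2\cdots\tt{a}_n$ differ only in $\tt{x}$, so they form a chain ordered by $\tt{x}$.

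The function $g$ cycles $\tt{a}_1$ through the chain values. Translating that back to the feedback symbol gives the three cases of $h_2$: output $0$ at the top of the chain, $\tt{z}{+}1$ inside the chain, and $\tt{z}$ otherwise. The contiguity of the chain, meaning that the valid $\tt{x}$ form an interval $1..\tt{x}_{\max}$, must be checked. I expect it to follow from the fact that decreasing $\tt{x}$ (with $\tt{y}$ increasing) keeps the string a necklace when it begins $\tt{0}^{n-j}\tt{x}$. This bookkeeping of missing symbols is the main correctness obstacle.

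\textbf{Efficiency.} Computing $\tt{z}=w-\wt(\alpha)$ and $j$ takes $\mathcal{O}(n)$ time. The key claim is monotonicity: if $\tt{0}^{n-j}\tt{x}\tt{y}\tt{a}_2\cdots\tt{a}_j$ is a necklace, then so is the string with a smaller $\tt{x}' \geq 1$ and the correspondingly larger $\tt{y}'$. The reason is that the prefix $\tt{0}^{n-j}\tt{x}'$ only becomes smaller, while any other rotation starting with $\tt{0}^{n-j}$ would have to lie inside $\tt{a}_2\cdots\tt{a}_j$, unaffected except through $\tt{y}$, and a larger $\tt{y}'$ only helps.

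Given monotonicity, $h_2$ only needs to know whether $\tt{x}>\tt{z}$, $\tt{x}=\tt{z}$, or $\tt{x}<\tt{z}$, or that $\tt{x}=-1$. I will argue that $\tt{a}_1$ itself is a valid $\tt{x}$ whenever $\alpha$ lies in the relevant chain. It then suffices to test a single candidate, the one with $\tt{x}=\tt{z}$ or $\tt{x}=\tt{z}{+}1$, subject to $\tt{y}\in\Sigma_t$, using one $\mathcal{O}(n)$ necklace test~\cite{Booth}, plus constant-time range checks on $\tt{y}$. Pinning down exactly which single candidate decides all three cases is where I expect the real work to be. My first attempt will be to test $\tt{x}=\tt{z}{+}1$ and then fall back on the range condition (i).

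All of this uses $\mathcal{O}(n)$ space, so each symbol of $\mathcal{V}_t(n,w)$ is produced in $\mathcal{O}(n)$ time.
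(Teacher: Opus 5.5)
\textbf{There is a genuine gap in the efficiency argument, which is the part that makes up the paper's proof of this theorem.} Your monotonicity lemma is correct: if $0^{n-j}xy\,a_2\cdots a_j$ is a valid candidate, then so is every smaller $x\ge 1$ with the compensating larger $y$. But the step the paper actually proves, deciding the case of $h_2$ with $\mathcal{O}(1)$ necklace tests, is left open, and the plan you sketch for it does not work.

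The cases of $h_2$ compare $x_{\max}$ with the missing symbol $z$. So $\alpha$ lies in the chain exactly when $z\le x_{\max}$, i.e.\ when $z$ (not $a_1$) is a valid candidate. Your claim that $a_1$ is valid whenever $\alpha$ is in the chain is false. In $\Sigma_5(3,4)$, $\alpha=400$ with $z=0$ is in the chain ($x_{\max}=3$, successor $1$), but $x=4$ gives $0040$, which is not a necklace.

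Likewise, testing $x=z+1$ and then falling back on condition (i) cannot separate $x_{\max}=z$ (output $0$) from $x_{\max}<z$ (output $z$). At $x=z$, condition (i) holds automatically because $y=a_1$. For example, $\alpha=111,z=1$ and $\alpha=101,z=2$ both fail at $x=z+1$ and both satisfy (i) at $x=z$, yet their successors are $0$ and $2$.

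The repair using your own lemma is immediate: test both $x=z+1$ and $x=z$. That is two $\mathcal{O}(n)$ necklace tests, which still gives the $\mathcal{O}(n)$ bound. The paper instead computes $x_{\max}$ itself with one test:
\begin{enumerate}
\item Scan for $x'$, the smallest symbol following an occurrence of $0^{n-j}$ in $a_2\cdots a_j$ (or $t-1$ if there is none).
\item Test the upper bound $\min(x',a_1+z)$ once.
\item If that test fails, the next smaller value (when positive) is automatically a necklace. Its prefix $0^{n-j}(x-1)$ is strictly smaller than every other cyclic factor of that length, and $y\neq 0$.
\end{enumerate}

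\textbf{The correctness half.} The paper's proof does not argue this part; it takes $h_2$ as the cycle-joining rule $h$ for the MSR tree. Your version has the roles of first and missing symbol reversed, which is the same $a_1$/$z$ confusion. The conjugate pair joining $0^{j-1}a_ja_{j+1}\cdots a_{n+1}$ to its parent consists of the windows at the \emph{same} starting position, not one position earlier:
\begin{itemize}
\item $a_{j+1}a_{j+2}\cdots a_{n+1}0^{j-1}$ with missing symbol $a_j$;
\item $(a_{j+1}{+}1)a_{j+2}\cdots a_{n+1}0^{j-1}$ with missing symbol $a_j{-}1$.
\end{itemize}
For example, $300$ with $z=1$ and $400$ with $z=0$ join $0013$ to $0004$. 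Moving toward the root, the first symbol goes up and the missing symbol goes down. That is why the chain is indexed by $z=0,1,\ldots,x_{\max}$ and $h_2$ advances $z$. Also, a chain is a parent--child path. The Chain Property says siblings never share a suffix; it does not say that siblings with a common suffix form a chain.
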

\begin{proof}  
We demonstrate that the value $\tt{x}$ in the successor rule can be determined by performing at most one necklace test.  Let $\beta = \tt{0}^{{n}-j}   \tt{x} \tt{y} \tt{a}_2\cdots \tt{a}_{j}$.
Let $\tt{x}'$ denote the smallest symbol following a substring $\tt{0}^{{n}-j}$ in $\tt{a}_2\cdots \tt{a}_{j}$, or let $\tt{x}'=\tt{t-1}$ if no such substring exists. In order for $\beta$ to be a necklace, clearly $\tt{x} \leq \tt{x}'$. 
Furthermore, $\tt{x} + \tt{y} = \tt{a}_1+\tt{z}$ to maintain the weight constraint, which means $\tt{x} \leq \tt{a}_1 + \tt{z}$.  Consider $\tt{x}$ to be the minimum of  $\tt{x}'$ and $\tt{a}_1 + \tt{z}$.  If this value is 0, then no necklace test is required.  If $\beta$ is a necklace for this value of $\tt{x}$, no more tests are required as it is the largest possible value for $\tt{x}$.  
Otherwise, decrementing the value of $\tt{x}$ (and incrementing $\tt{y}$) will result in a necklace since the length $n-j+1$ prefix of $\beta$ will be smaller than all other substrings of $\beta$ (considered cyclically) of the same length, by the values of $\tt{x}'$ and the fact that $\tt{y} \neq 0$.
\end{proof}

The following result follows from the fact that the strings in $\Sigma_t(n,w)$ can be thought of as shorthand representatives for the subset of $\Sigma_t(n+1)$ containing strings with fixed weight $w<t$.

\begin{theorem}
The universal cycle $\mathcal{V}_t(n,w)$ is a fixed-weight de Bruijn sequence using shorthand representatives when $w < t$.  Moreover, the sequence $\mathcal{V}_t(n,w)$ with the $\tt{0}^n$ substring replaced with $\tt{0}^{n-1}$ is a fixed-weight de Bruijn sequence using shorthand representatives when $w = t$.
\end{theorem}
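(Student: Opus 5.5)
For $w<t$ I would argue directly from Theorem~\ref{thm:MSR}. For $w=t$ there is a real gap: $\mathcal{V}_t(n,t)$ is not covered by that theorem, so I first have to establish that it is a universal cycle for $\Sigma_t(n,t)$, and only then delete a zero.

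\textbf{The case $w<t$.} By Theorem~\ref{thm:MSR}, $\mathcal{V}_t(n,w)$ is a universal cycle for $\Sigma_t(n,w)$. So the first claim reduces to a bijection. The map $\tt{a}_1\cdots\tt{a}_n \mapsto \tt{a}_1\cdots\tt{a}_n\tt{z}$ with $\tt{z}=w-\wt(\tt{a}_1\cdots\tt{a}_n)$ is a bijection from $\Sigma_t(n,w)$ onto the length-$(n+1)$ strings over $\Sigma_t$ of weight exactly $w$. The point is that $0\le \tt{z}\le w<t$, so $\tt{z}\in\Sigma_t$. Hence every window of $\mathcal{V}_t(n,w)$ is the shorthand representative of a unique fixed-weight string, each such string is represented exactly once, and the length of the cycle is correct.

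\textbf{The case $w=t$, step 1: $\mathcal{V}_t(n,t)$ is a universal cycle.} Every string in $\Sigma_t(n,t)$ other than $\tt{0}^n$ has missing symbol $\tt{z}=t-\wt(\alpha)\le t-1$, which lies in $\Sigma_t$. The MSR is therefore well defined off $\tt{0}^n$. I would give $\tt{0}^n$ its own singleton cycle by setting its feedback to $\tt{0}$.

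\begin{itemize}
\item The remaining MSR classes correspond to the necklaces in $\bN_t(n+1)$ of weight exactly $t$. The natural root $\tt{0}^n\tt{t}$ is now absent, because $\tt{t}\notin\Sigma_t$.
\item I would rebuild the first non-zero (MSR) tree on these classes, adding the singleton $\{\tt{0}^n\}$ as an extra node. It attaches through a conjugate pair $(\tt{0}^n,\tt{x}\tt{0}^{n-1})$ to the class whose root-child in the $t+1$ alphabet would have been $\tt{0}^{n-1}\tt{1}(\tt{t}{-}\tt{1})$.
\item Concretely, this means comparing with $\mathcal{V}_{t+1}(n,t)$, to which Theorem~\ref{thm:MSR} applies since $t<t+1$. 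There the root cycle is $\tt{0}^n\tt{t}$, and all other nodes avoid the symbol $\tt{t}$. Removing the root and replacing it by the singleton $\tt{0}^n$ keeps the remaining tree, its conjugate pairs and the Chain Property intact.
\item I would then check that the $x$ computed in $h_2$ never equals $\tt{t}$ for these nodes. With that, the generic rule $h(\alpha)$ of Section~\ref{sec:cycle-join} gives a universal cycle for $\Sigma_t(n,t)$ whose successor rule is $h_2$, and this is $\mathcal{V}_t(n,t)$.
\end{itemize}

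Checking that this surgery on the tree still yields exactly $h_2$, and not some other joining, is the step I expect to be the main obstacle.

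\textbf{The case $w=t$, step 2: deleting one zero.} I would show the maximal run of zeros in the cycle $\mathcal{V}_t(n,t)$ has length exactly $n$. It has length at least $n$ because $\tt{0}^n$ occurs. A run of length $n+1$ would make the window $\tt{0}^n$ occur twice, contradicting universality. So $\tt{0}^n$ sits in a maximal run $\tt{b}\,\tt{0}^n\,\tt{c}$ with $\tt{b},\tt{c}\neq\tt{0}$.

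Now shorten this run by one zero. Each length-$n$ window overlapping the run contains either a proper prefix or a proper suffix of the run, or the whole run. Every window of the first two kinds still occurs exactly once after the deletion. The only window lost is $\tt{0}^n$ itself, and all windows disjoint from the run are unchanged.

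\textbf{Conclusion.} The resulting cyclic sequence has length $|\Sigma_t(n,t)|-1$ and contains each string of $\Sigma_t(n,t)\setminus\{\tt{0}^n\}$ exactly once. These are precisely the length-$n$ strings with weight in $[1,t]$. Their missing symbols $t-\wt(\alpha)$ lie in $\Sigma_t$, so they are exactly the shorthand representatives of the length-$(n+1)$ strings over $\Sigma_t$ of weight $t$. Hence the modified sequence is a fixed-weight de Bruijn sequence using shorthand representatives.
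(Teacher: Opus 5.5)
Your argument for $w<t$ is the paper's own justification: Theorem~\ref{thm:MSR} plus the shorthand bijection, using $\tt{z}=w-\wt(\alpha)\le w<t$. Your zero-deletion argument in step 2 is also correct.

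\textbf{The gap in step 1.} The gap is the last bullet of step 1, and it cannot be closed, because no universal cycle for $\Sigma_t(n,t)$ can have successor rule $h_2$. Take $\alpha=\tt{0}^n$ and $w=t$. Its missing symbol is $\tt{z}=\tt{t}\notin\Sigma_t$, so $h_2(\tt{0}^n)$ is not defined. If you evaluate it formally anyway, you get $j=1$, and the test string $\tt{0}^{n-1}(\tt{t}{-}\tt{1})\tt{1}$ is a necklace. So $\tt{x}=\tt{t}{-}\tt{1}<\tt{z}$, and the rule returns $\tt{z}=\tt{t}$. Your planned check that $\tt{x}\neq\tt{t}$ is vacuous, since $\tt{x}\le\tt{t}{-}\tt{1}$ by definition; the problem is $\tt{z}$.

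One also checks that on every other string of $\Sigma_{t+1}(n,t)$, the rule $h_2$ computed formally over $\Sigma_t$ agrees with $h_2$ over $\Sigma_{t+1}$. The extra candidate $\tt{x}=\tt{t}$ only arises for $\alpha=\tt{a}_1\tt{0}^{n-1}$, and there $\tt{0}^{n-1}\tt{t}\tt{0}$ is not a necklace. So iterating $h_2$ from $\tt{0}^n$ just reproduces $\mathcal{V}_{t+1}(n,t)$. That sequence contains $\tt{0}^n\tt{t}\tt{0}^{n-1}\tt{1}$, and shortening its zero run does not give a sequence over $\Sigma_t$.

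In your surgered tree, by contrast, $\tt{0}^n$ is followed by $\tt{1}$. This is the missing symbol of $(\tt{t}{-}\tt{1})\tt{0}^{n-1}$, the next string in the chain with $\beta=\tt{0}^{n-1}$. So the obstacle you flagged is fatal: the second sentence of the statement, read literally for $\mathcal{V}$, is not true. The paper's one-line justification only addresses $w<t$.

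\textbf{What your surgery does prove.} It establishes a corrected claim, and you should state that claim explicitly instead of identifying the cycle with $\mathcal{V}_t(n,t)$. The rule $h_2$, with the single exception $\tt{0}^n\mapsto\tt{1}$, generates a universal cycle for $\Sigma_t(n,t)$. This cycle is $\mathcal{V}_{t+1}(n,t)$ with the block $\tt{t}\tt{0}^{n-1}$ that follows $\tt{0}^n$ deleted. Your step 2 then yields the fixed-weight sequence.

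You do not even need the singleton node. The root $\tt{0}^n\tt{t}$ of the $(t{+}1)$-tree has the unique child $\tt{0}^{n-1}\tt{1}(\tt{t}{-}\tt{1})$. Its class consists of $\tt{0}^n$ together with exactly the $n$ strings of $\Sigma_{t+1}(n,t)$ that use the symbol $\tt{t}$. Deleting this node leaves a valid cycle-joining tree for $\Sigma_t(n,t)\setminus\{\tt{0}^n\}$. The segment $\tt{1}\tt{0}^{n}\tt{t}\tt{0}^{n-1}\tt{1}$ of $\mathcal{V}_{t+1}(n,t)$ then collapses directly to $\tt{1}\tt{0}^{n-1}\tt{1}$, which is the desired fixed-weight de Bruijn sequence.
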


Interestingly, $\mathcal{V}_t(n,w)$ also appears to have interesting concatenation properties.  Consider:
\[ \mathcal{V}_5(3,4) = \tt{\underline{000}4 \cdot 0013 \cdot 0103 \cdot 0022 \cdot 0112 \cdot 02 \cdot 0031\cdot 0121 \cdot 0211 \cdot 1.} \]
This sequence corresponds to the concatenation of the aperiodic prefixes of the necklaces  in $\bN_5(4)$  with exactly weight $w=4$ as they appear in \emph{reverse} colex order!  Moreover, observe that a preorder traversal of the corresponding ``concatenation-like tree'' (see Figure~\ref{fig:treeMSR}) also corresponds to visiting the necklaces in reverse colex order.
These observations lead to the following conjecture, where $\mathcal{V'}_t(n,w)$ denotes the sequence obtained by 
concatenating the aperiodic prefixes of the necklaces in $\bN_t(n+1)$  with weight $w$ as they appear in \emph{reverse} colex order.
\begin{conjecture}
The universal cycle $\mathcal{V}_t(n,w)$ is equivalent to $\mathcal{V'}_t(n,w)$, where $n,t \geq 2$ and $w < t$.   Moreover, 
the universal cycle can be generated in $\mathcal{O}(1)$ amortized time per symbol.
\end{conjecture}

\noindent
We believe that this conjecture can be proved by generalizing the (PCR-based) concatenation tree framework~\cite{concattree} to other feedback functions (like the MSR), then following a similar analysis to the one we gave for the sequence $\mathcal{U}_t(n,w)$ to obtain the running time result.

\section{Application to $k$-subsets and $k$-multisets}  \label{sec:results}

In this section, we apply the results from the previous section to $k$-subsets assuming $n \geq k \geq 1$, and $k$-multisets assuming $n, k \geq 2$.  Recall the following observations made in Section~\ref{sec:intro}.
\begin{itemize}
\item When the subsets of $\bS_k(n)$ are represented by their difference representatives with each symbol $\tt{x}$ mapped to $\tt{x{-}1}$, the subsets correspond to $\Sigma_{n-k+1}(k,n-k)$.
\smallskip

\item When the multisets of $\bM_k(n)$ are represented by their shorthand frequency representatives, the multisets correspond to $\Sigma_{k+1}(n-1,k)$. \smallskip

\item When the multisets of $\bM_k(n)$ are represented by their difference representatives, the multisets correspond to $\Sigma_{n}(k,n-1)$.

%By using the difference representation and then mapping each symbol , the set of subsets $\bS_k(n)$ can be represented by the set of strings over the  alphabet $\{0,1,\ldots , n{-}k\}$ of length $k$, where the weight of each string is bounded above by $n{-}k$.  
%\item By using the shorthand frequency representation, the set of multisets $\bM_k(n)$ can be represented by the set of strings over the alphabet $\{\tt{0,1},\ldots, \tt{k}\}$ of length $n-1$, where the weight of each string is bounded above by $k$. 
%\item By using the difference representation, the set of multisets $\bM_k(n)$ can be represented by the set of strings over the alphabet $\{\tt{0,1},\ldots, \tt{n-1}\}$ of length $k$, where the weight of each string is bounded above by $n-1$. 
\end{itemize}
 %
 %In all three cases, the weight constraint is the same as the size of the alphabet. %This property will be useful later in this section.

\noindent
These observations immediately give rise to the following three results. 

\begin{theorem}
%A universal cycle for $\mathcal{S}_1$ for $\bS_k(n)$ using difference representatives can be constructed in $\mathcal{O}(1)$ amortized time per symbol.  Moreover, given any subset in $\bS_k(n)$ with difference representative $d_1d_2\cdots d_k$, the symbol following this string in $\mathcal{S}_1$ can be computed in $\mathcal{O}(n)$ time. 
$\mathcal{U}_{n-k+1}(k,n-k)$ and $\mathcal{V}_{n-k+1}(k,n-k)$ are universal cycles  
for $\bS_k(n)$ using difference representatives, where each symbol $\tt{x}$ in the representative is mapped to $\tt{x-1}$.  
\end{theorem}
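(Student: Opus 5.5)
The plan is to reduce the statement to the bounded-weight results of Section~\ref{sec:bound} through an explicit bijection, and then check that the parameters fall in the ranges those results require.

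\textbf{Step 1: the bijection.} Let $\{s_1 < s_2 < \cdots < s_k\}$ be a $k$-subset of $[n]$. Its difference representative is $d_1 d_2\cdots d_k$ with $d_1 = s_1$ and $d_i = s_i - s_{i-1}$ for $i \geq 2$, so every $d_i \geq 1$. Define $\phi$ to send this subset to $e_1e_2\cdots e_k$ with $e_i = d_i - 1$.
\begin{itemize}
\item Each $e_i \geq 0$, and $\sum_i e_i = s_k - k \leq n-k$.
\item Consequently each $e_i \leq n-k$, so $e_i \in \Sigma_{n-k+1}$ and $\phi(\{s_1,\ldots,s_k\}) \in \Sigma_{n-k+1}(k,n-k)$.
\item Conversely, take any $e_1\cdots e_k \in \Sigma_{n-k+1}(k,n-k)$. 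Setting $s_i = \sum_{j\le i}(e_j+1)$ gives a strictly increasing sequence with $1 \leq s_1$ and $s_k \leq n$. This inverts $\phi$.
\end{itemize}
Hence $\phi$ is a bijection, and $|\bS_k(n)| = |\Sigma_{n-k+1}(k,n-k)|$.

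\textbf{Step 2: transfer universal cycles.} Let $\mathcal{W}$ be any universal cycle for $\Sigma_{n-k+1}(k,n-k)$. Then every length-$k$ cyclic window of $\mathcal{W}$ is some $\phi(S)$, and each such string appears exactly once. Adding $1$ to every symbol of $\mathcal{W}$ therefore gives a cyclic sequence of length $|\bS_k(n)|$ in which each difference representative appears exactly once. This is precisely the claim, read through the map $\tt{x}\mapsto\tt{x-1}$.

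\textbf{Step 3: apply the two earlier results.}
\begin{itemize}
\item For $\mathcal{U}_{n-k+1}(k,n-k)$, invoke Theorem~\ref{thm:main}. It applies whenever $t = n-k+1 \geq 1$ and $w = n-k \geq 0$, and both hold since $n \geq k \geq 1$. It shows this sequence is a universal cycle for $\Sigma_t(k,w)$.
\item For $\mathcal{V}_{n-k+1}(k,n-k)$, invoke Theorem~\ref{thm:MSR}. Its weight condition $w < t$ holds automatically, because $w = n-k < n-k+1 = t$. This is exactly why the MSR construction fits here.
\end{itemize}

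\textbf{Main obstacle.} The only point that needs care is the range hypothesis of Theorem~\ref{thm:MSR}, which requires $n,t \geq 2$; here the length parameter is $k$ and the alphabet size is $t = n-k+1$. I would treat the degenerate cases directly:
\begin{itemize}
\item If $k = 1$, every string has length $1$ and $\Sigma_{n}(1,n-1)$ is the whole alphabet. The successor rule then just cycles through the symbols, and both sequences are universal cycles trivially.
\item If $k = n$, then $t = 1$ and the only string is $\tt{0}^k$. The set has a single element, so the universal cycle is trivial.
\end{itemize}
With these boundary cases handled separately, Steps 1–3 complete the argument.
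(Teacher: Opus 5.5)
Your proposal is correct and is essentially the paper's argument. The paper likewise just observes that $x\mapsto x-1$ sends difference representatives of $\bS_k(n)$ bijectively onto $\Sigma_{n-k+1}(k,n-k)$ and then invokes the bounded-weight constructions of Section~\ref{sec:bound}, so your explicit bijection and your check of the hypotheses $k\geq 2$, $n-k+1\geq 2$ of Theorem~\ref{thm:MSR} (which the paper never addresses) only make it more complete. One small correction: for $k=1$ the rule $h_2$ does not step through the symbols in increasing order; it produces $0,\,n-1,\,1,\,n-2,\,2,\ldots$, which is still a permutation of $\Sigma_n$ and hence a universal cycle, so ``trivially'' should be replaced by this one-line check (and for $\mathcal{U}$ no special case is needed at all, since Theorem~\ref{thm:main} already allows length and alphabet size $1$).
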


%======================
\begin{exam}
Consider two universal cycles for $\Sigma_4(3,3)$:
\[ \mathcal{U}_4(3,3) = \tt{0  \cdot 001 \cdot  011  \cdot 1  \cdot 021  \cdot 002 \cdot  012  \cdot 003, \mbox{ and }} \]
\[ \mathcal{V}_4(3,3) = \tt{0003 \cdot 0012 \cdot 0021 \cdot 0111 \cdot 0201.} \]
By mapping each symbol $\tt{x}$ to $\tt{x+1}$ in the above sequences, we obtain universal cycles $\mathcal{S}_1$ and $\mathcal{S}_2$ for $\bS_3(6)$ using difference representatives:
%where each substring of length $3$ corresponds to a unique difference representative for $\bS_3(6)$:
%
\[ \mathcal{S}_1 =  \tt{1  \cdot 112  \cdot 122 \cdot  2  \cdot 132  \cdot 113  \cdot 123  \cdot 114}, \mbox{ and } \]
\[ \mathcal{S}_2 =  \tt{1114 \cdot 1123 \cdot 1132 \cdot 1222 \cdot 1312.}  \]
Each universal cycle above has the expected length of  ${6 \choose 3} = 20$.
\end{exam}

\begin{theorem} 
%A universal cycle for $\mathcal{M}_1$ for $\bM_k(n)$ using  shorthand frequency representatives can be constructed in $\mathcal{O}(1)$ amortized time per symbol.  Moreover, given any subset in $\bM_k(n)$ with shorthand frequency representative $f_1f_2\cdots f_{n-1}$, the symbol following this string in $\mathcal{M}_1$ can be computed in $\mathcal{O}(n)$ time.  
$\mathcal{U}_{k+1}(n-1,k)$ and $\mathcal{V}_{k+1}(n-1,k)$ are  universal cycles  
for $\bM_k(n)$ using shorthand frequency  representatives.  
\end{theorem}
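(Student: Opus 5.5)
The plan is to reduce the statement to results already established for bounded-weight de Bruijn sequences, through an explicit bijection.

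First, I will check that the shorthand frequency map $\{\,\cdot\,\} \mapsto f_1f_2\cdots f_{n-1}$ is a bijection from $\bM_k(n)$ onto $\Sigma_{k+1}(n-1,k)$.

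For a multiset $M \in \bM_k(n)$, each $f_i$ counts occurrences of $i$ in $M$. Hence $0 \le f_i \le k$, so each $f_i \in \Sigma_{k+1}$. Also $\sum_{i=1}^{n-1} f_i = k - f_n \le k$, so the string has weight at most $k$.

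Conversely, take any string $f_1\cdots f_{n-1}$ over $\Sigma_{k+1}$ with weight at most $k$. Setting $f_n = k - \sum_{i=1}^{n-1} f_i \ge 0$ gives a valid frequency map of a unique $k$-multiset. The two maps are mutually inverse, so the correspondence is bijective. In particular $|\bM_k(n)| = |\Sigma_{k+1}(n-1,k)|$.

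Next, I will invoke the earlier results:

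\begin{itemize}
\item $\mathcal{U}_{k+1}(n-1,k)$ is a universal cycle for $\Sigma_{k+1}(n-1,k)$. This holds by construction via the first non-zero successor rule $h_1$, and by Theorem~\ref{thm:main} for $t = k+1 \ge 1$, $n-1 \ge 1$ and $w = k \ge 0$.
\item $\mathcal{V}_{k+1}(n-1,k)$ is a universal cycle for the same set. This follows from Theorem~\ref{thm:MSR}, whose hypotheses need $n-1 \ge 2$, $t = k+1 \ge 2$ and $0 \le w < t$. The last condition holds since $w = k < k+1$.
\end{itemize}

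A universal cycle for $\Sigma_{k+1}(n-1,k)$ contains each string of the set exactly once as a length-$(n-1)$ substring and has length $|\Sigma_{k+1}(n-1,k)| = |\bM_k(n)|$. Composing with the bijection, each length-$(n-1)$ window is exactly one shorthand frequency representative, and every multiset appears exactly once. This is precisely the definition of a universal cycle for $\bM_k(n)$ under the shorthand frequency representation.

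The main obstacle is the boundary case $n = 2$ for $\mathcal{V}$. There the strings have length $1$, and Theorem~\ref{thm:MSR} as stated assumes $n \ge 2$ for the string length. I would handle this case directly. $\Sigma_{k+1}(1,k) = \{0,1,\ldots,k\}$, so any cyclic ordering of these $k+1$ symbols is trivially a universal cycle.

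Alternatively, I would argue that the MSR cycle-joining construction underlying $h_2$ is still valid for length-$1$ strings. In that setting the nodes are the necklaces of length $2$ and weight $k$, and the first non-zero parent rule remains well defined because $w < t$. The conclusion of the statement then holds in all cases $n, k \ge 2$.
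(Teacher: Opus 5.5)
Your proposal is correct and is essentially the paper's argument: the paper simply observes that the shorthand frequency representatives of $\bM_k(n)$ are exactly the strings of $\Sigma_{k+1}(n-1,k)$ and then appeals to the constructions of $\mathcal{U}$ and $\mathcal{V}$, which is what you do via the explicit bijection. You are right that the MSR theorem as stated does not cover string length $n-1=1$ (the paper silently skips this case), but your remark that any cyclic ordering of $\{0,\ldots,k\}$ works only settles it once you check that iterating $h_2$ from $0$ visits each symbol exactly once; that check is immediate, because for length-$1$ strings $\mathtt{x}=\lfloor k/2\rfloor$ for every $\alpha$, and $h_2$ then outputs $0,k,1,k-1,2,k-2,\ldots$, ending at $\lceil k/2\rceil$.
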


\begin{theorem} 
%A universal cycle for $\mathcal{M}_2$ for $\bM_k(n)$ using  difference representatives can be constructed in $\mathcal{O}(1)$ amortized time per symbol.  Moreover, given any subset in $\bM_k(n)$ with difference representative $d_1d_2\cdots d_{k}$, the symbol following this string in $\mathcal{M}_2$ can be computed in $\mathcal{O}(n)$ time.  
$\mathcal{U}_{n}(k,n-1)$  and $\mathcal{V}_{n}(k,n-1)$ are universal cycles  
for $\bM_k(n)$ using difference representatives.  
\end{theorem}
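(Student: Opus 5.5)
We need to prove the last theorem: $\mathcal{U}_n(k,n-1)$ and $\mathcal{V}_n(k,n-1)$ are universal cycles for $\bM_k(n)$ using difference representatives.

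The plan is to reduce the statement to the earlier results on bounded-weight de Bruijn sequences via a bijection. First, define the difference map precisely. Regard a $k$-multiset over ground set $\{0,1,\ldots,n-1\}$ as a sorted sequence $m_1\le m_2\le\cdots\le m_k$. Its representative is $d_1d_2\cdots d_k$ with $d_1=m_1$ and $d_i=m_i-m_{i-1}$ for $i\ge 2$. (Over $[n]$, one first subtracts one from each element.) Every $d_i$ is a nonnegative integer. Moreover, $\wt(d_1\cdots d_k)=m_k\le n-1$, which forces each $d_i\le n-1$. Hence the representative lies in $\Sigma_n(k,n-1)$.

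Conversely, given any string in $\Sigma_n(k,n-1)$, take prefix sums $m_i=d_1+\cdots+d_i$. These form a nondecreasing sequence with values in $\{0,\ldots,n-1\}$, so they define a $k$-multiset. The two maps are mutually inverse, which gives a bijection between $\bM_k(n)$ and $\Sigma_n(k,n-1)$. As a sanity check, $|\Sigma_n(k,n-1)|=\binom{n-1+k}{k}=|\bM_k(n)|$ by stars and bars.

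Next, invoke the constructions. By Theorem~\ref{thm:main}, or equivalently the successor rule $h_1$ (the first non-zero rule of \cite{karyframework}), $\mathcal{U}_n(k,n-1)$ is a universal cycle for $\Sigma_n(k,n-1)$; this holds for all $t\ge1$ and $w\ge0$. For $\mathcal{V}$, I would apply Theorem~\ref{thm:MSR} with $t=n$ and $w=n-1$. Its hypotheses are $n,t\ge 2$ and $0\le w<t$, which hold since $n,k\ge 2$ and $n-1<n$. So $\mathcal{V}_n(k,n-1)$ is a universal cycle for $\Sigma_n(k,n-1)$.

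Finally, transfer through the bijection. Each length-$k$ cyclic window of either sequence is a string of $\Sigma_n(k,n-1)$, and each such string occurs exactly once. Therefore each multiset of $\bM_k(n)$ appears exactly once via its difference representative, and the cycle length equals $|\bM_k(n)|$.

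The only delicate point is confirming that the hypotheses of Theorem~\ref{thm:MSR} are met: the weight bound must be strictly less than the alphabet size so that the missing symbol is well defined. Here this is exactly the observation that $w=n-1<n=t$. Everything else is bookkeeping.
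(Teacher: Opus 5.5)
Your proposal is correct and follows the paper's route. The paper likewise observes that the difference representatives of $\bM_k(n)$ are exactly the strings of $\Sigma_n(k,n-1)$, and then reads the theorem off from the earlier results that $\mathcal{U}_t(n,w)$ and $\mathcal{V}_t(n,w)$ are universal cycles for $\Sigma_t(n,w)$, with the latter requiring $w<t$, here $n-1<n$; you spell out the bijection and the hypothesis check that the paper leaves implicit.
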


\noindent
%By applying Theorem~\ref{thm:main} and the previously mentioned fact that the First non-zero (Grandmama) successor rule runs in $\mathcal{O}(n)$-time, we obtain the following results.

\begin{exam}
The universal cycles $\mathcal{U}_5(3,4)$ and $\mathcal{V}_5(3,4)$ are universal cycles for 
$\bM_4(4)$ using shorthand frequency representatives:
\[ \mathcal{U}_5(3,4) = \tt{0 \cdot 001 \cdot 011 \cdot 1 \cdot 021 \cdot 031 \cdot 002 \cdot 012 \cdot 112 \cdot 022 \cdot 003 \cdot 013 \cdot 004}, \mbox{ and } \]
\[ \mathcal{V}_5(3,4) = \tt{0004 \cdot 0013 \cdot 0103 \cdot 0022 \cdot 0112 \cdot 02 \cdot 0031\cdot 0121 \cdot 0211 \cdot 1.} \]

\smallskip

\noindent
The universal cycles $\mathcal{U}_4(4,3)$ and $\mathcal{V}_4(4,3)$ are universal cycles for $\bM_4(4)$ using difference representatives:
\[ \mathcal{U}_4(4,3) =   \tt{0 \cdot 0001 \cdot  01 \cdot  0011 \cdot 0111 \cdot 0021 \cdot 0002 \cdot 0102 \cdot 0012 \cdot 0003}, \mbox{ and }\]
\[ \mathcal{V}_4(4,3) =   \tt{00003 \cdot 00012 \cdot  00102 \cdot  00021 \cdot 00111 \cdot 01011 \cdot 00201}.\]
\noindent
Each universal cycle above has the expected length of  ${4+4-1 \choose 4} = 35$.

\end{exam}

\noindent
The above three results imply that universal cycles for $k$-subsets and $k$-multisets can be constructed in $\mathcal{O}(1)$ amortized time per symbol, or via an $\mathcal{O}(n)$ time per symbol successor rule.  %Either approach requires $\mathcal{O}(n)$ space.    
The latter two results are the first known universal cycle constructions for $k$-multisets.

\section{Acknowledgment}
Joe Sawada (grant RGPIN-2025-03961) gratefully acknowledges research support from the Natural Sciences and Engineering Research Council of Canada (NSERC).

%-----------  BIBLIOGRAPHY  ---------------

\bibliographystyle{alphaabbr}
\bibliography{refs}
\normalsize
\appendix

\newpage
\section{C implementation of the bounded weight ``Grandmama'' de Bruijn sequence}

\footnotesize

\begin{code}
//===========================================================================
// Genereate the (upper) bounded weight "Grandmama" de Bruijn sequence 
// over alphabet {0,1, ... , t-1} and upper bound on weight of w.
// It can be applied to construct a UC for k-subsets or k-multisets of [n].
//===========================================================================
#include<stdio.h>
int a[100],n,t,w;

//=============================================================================
// Test if a[1..n] is a necklace, if so return the length of its aperiodic
// prefix; otherwise return 0
//===========================================================================
int IsNecklace(int a[]) {
    int i,p=1;
    
    for (i=1; i<=n; i++) {
        if (a[i] < a[i-p]) return 0;
        if (a[i] > a[i-p]) p=i;
    }
    if (n%p == 0) return p;
    return 0;
}
//=============================================================================
// Visit "first non-zero" concatenation tree in RCL order by dynamically
// generating children of the current node a[1..n] with aperiodic prefix of
// length p, change index c, and weight w2.
// This visits necklaces with bounded weight w in colex order.
//=============================================================================
void RCL(int a[], int p, int c, int w2) {
    int i,j;
    
    // VISIT: Print aperiodic prefix of a[1..n]
    for (i=1; i<=p; i++) printf("%d", a[i]);
    if (w2 == w) return;    // No children when max weight achieved
    
    // Scan from c left to determine the first index for a child
    a[c]++;
    if (a[c] < t && !IsNecklace(a))  { a[c]--; return; }
    a[c]--;
    
    j=c-1;
    a[j] = 1;
    while (j >=1 && IsNecklace(a)) {
        a[j] = 0; j--; a[j] = 1;
    }
    a[j] = 0;
    
    // Visit children from left to right, handle change index separately
    for (i=j+1; i<c; i++) {
        a[i] = 1;
        RCL(a, IsNecklace(a), i, w2+1);
        a[i] = 0;
    }
    if (a[c] < t-1) {
        a[c]++;
        RCL(a, IsNecklace(a), c, w2+1);
        a[c]--;
    }
}
//===============================================
int main() {
    
    printf("Enter n t w: ");  scanf("%d %d %d", &n, &t, &w);
    for (int i=1; i<=n; i++) a[i] = 0;
    RCL(a,1,n,0);
}
    
\end{code}

\end{document}